\newcommand{\R}{{\mathbb R}}
\newcommand{\C}{{\mathbb C}}
\newcommand{\Lie}{{\mathcal L}}
\DeclareMathOperator{\const}{const}
\DeclareMathOperator{\rd}{d}
\newtheorem{df}{Definition}
\newtheorem{lm}{Lemma}
\newtheorem{cor}{Corollary}
\newtheorem{thm}{Theorem}
\newtheorem{prop}{Proposition}
\theoremstyle{remark}
\newtheorem{remark}{Remark}
\title{Extreme horizon equation}
\author{Wojciech Kamiński \orcidlink{0000-0003-3707-6087}, Jerzy Lewandowski
 \orcidlink{0000-0001-8512-1490}}
\date{%
\small
  {\it   Faculty of Physics, University of Warsaw,\\
ul. Pasteura 5, 02-093 Warsaw, Poland}\\[2ex]%
    \today
}
\begin{document}

\maketitle

\begin{abstract}
Extremal horizons satisfy an equation induced by the Einstein vacuum equations  that determines the shape of the horizon and the manner in which it rotates (the EEH equation). Until recently, however, the classification of solutions required the assumption of axial symmetry. Recently, there has been a breakthrough: Dunajski and Lucietti proved that every non-static solution possesses a one-dimensional symmetry group. The first part of our work is inspired by this  result. An identity satisfied by the solutions of the EEH equation has been distilled (Master Identity), which is crucial for studying their properties.   It is a bit stronger than the original Dunajski-Lucietti identity and leads directly to the rigidity theorem for any value of the cosmological constant.   Master Identity is used for  a simple derivation of the local form of the general static  solution of the EEH equation with non-positive cosmological constant. All the globally defined compact static  solutions are derived.  Thus  the list of solutions given in the literature is completed.  In the two-dimensional case (which corresponds to horizons in four-dimensional spacetime), the Einstein-Maxwell  equations  of an extremal horizon (EMEH)  and the equations of quasi-Einstein spaces  are studied. The general solution on a compact surface with non-zero genus is derived.  In the case of zero genus,  the static solutions are investigated  and  their axial symmetry is proven.   Together with the new results on non-static solutions on sphere of Colling, Katona and Lucietti that leads to the uniqueness of the Reissner–Nordström-(Anti)de-Sitter extremal horizons. Interestingly,  the static rigidity result is also valid for non-compact spaces with a zero first cohomology group.
\end{abstract}

\section{Introduction: extreme horizons}
The isolated horizons program involves studying the horizons of black holes, as well as cosmological horizons, in a quasi-local manner abstracting from the rest of space-time \cite{ABF1, ABL1, ABL2, AK}. 
Horizon properties are described by an induced degenerate null surface metric and a $1$-form rotation.   Einstein's equations satisfied by space-time induce equations for the geometry of the horizon \cite{ABL1}. 
The case of the extreme horizon is particularly interesting. Mathematicians call it degenerate, while physicists assign it the temperature equal to zero. The vacuum extremal horizon (EEH) equation imposed on the metric tensor and the rotation $1$-form defined on spacelike slice of the horizon is the subject of this paper.  The systematic study of this equation began as part of the isolated horizon program \cite{ABL1,LPextremal,LPhigher,LPJfol}, the equation also appeared earlier in the literature  \cite{Hajicek,IM}. The results that attracted interest were:   the uniqueness of the extremal Kerr horizon geometry as an axially symmetric solution of the EEH equation with zero cosmological constant on a 2-sphere \cite{Hajicek, LPextremal} and the extremal Kerr-Newman horizon in the presence of Maxwell field (EMEH equation) \cite{LPextremal}, the natural generalization of the EEH equation to any spacetime dimension  \cite{LPhigher}, and the relationship with the so-called Near Horizon Geometries \cite{LPJfol,Kunduri2007,KunduriVac,Kunduri:2008tk,NHG}. The uniqueness of axially symmetric solutions defined on $2$-sphere  were generalized to the presence of the cosmological constant \cite{Kunduri:2008tk,KunduriVac,Buk:2020ttx}, to the Einstein-Yang-Mills theory   \cite{Li,LiLucietti13}. The topology of the $2$ sphere was shown to be distinguished for the EEH equation in $2$-dimensions,  since for surfaces of higher genus the only solutions have zero  rotation 1-form potential and constant curvature \cite{DKLS2}. 

The same property is possessed by all the 2-dimensional static solutions (that is of a closed rotation 1-form potential), including those defined on the 2-sphere in this case \cite{CRT, Kunduri:2008tk}. Similar limitations have been proved for static solutions of the EEH equation  in higher dimensions, in the case of the non-positive cosmological constant, but in this case there do exist static solutions with non-zero rotation 1-form \cite{Bahuaud:2022iao}. 
Such solutions were subsequently classified \cite{Wylie2023}, although some topologically non-trivial cases were overlooked in the literature. We consider this case separately below.

The breakthrough result has been  the intrinsic rigidity theorem \cite{DL-rigidity},  first shown in the case of non-negative cosmological constant and subsequently generalized to arbitrary cosmological constant case \cite{Colling2024} and recently to the case with electromagnetic field in dimension $2$ \cite{CKL-new}.  It states that on a compact manifold, every solution of the  corresponding extremity equation  possesses a one-dimensional symmetry group. The intrinsic rigidity had been conjectured earlier, and it had been proven for linear perturbations of the extremity equation  about the Kerr horizon \cite{JK12,  CST}.  In the two-dimensional case, this result combined with the previous ones completely solves the problem. Indeed,  the general solution is known in exact form, on topological $2$-sphere it corresponds to the extremal Kerr, Kerr-de-Sitter or Kerr-Anti-de-Sitter spacetimes  \cite{LPextremal,Kunduri:2008tk,KunduriVac,Buk:2020ttx}, while those on a higher genus surfaces have constant curvature equal to the cosmological constant and  zero rotation 1-form  \cite{DKLS2}, they correspond to horizons in extremal A-metrics spacetimes (also called topological black holes). 

It should be mentioned, that an equivalent equation to the EEH equation  is satisfied by geometry of a null, non-expanding and shear free foliation \cite{LSW}. If the foliation is emanating from an extremal isolated horizon, then the geometric mechanism is understood, however it is satisfied also in the absence of an extremal horizon.  

The first part of the current work was inspired by the groundbreaking proof of the rigidity of solutions to the EEH equation by Dunajski and Lucietti \cite{DL-rigidity}. First, we presented a small improvement of the crucial Andersson-Mars-Simon  theorem \cite{AMS} on certain operator emerging in the context of marginally trapped surfaces \cite{ABL1}. It is a simple corollary of the proof from \cite{ABL1}, however it was not stated explicitly in the literature.  Although, this improvement is not necessary for the proof of the rigidity theorem, the Andersson-Mars-Simon theorem seems important and the property shown by us might be useful in future development. We also formulated and proved a  necessary and sufficient condition for the existence of a Killing field on a compact Riemann space.   We have extracted a key identity (Master Identity) that is satisfied by the solutions of the EEH equation.  Our identity contains the identity derived in \cite{DL-rigidity}, but it is a bit stronger and leads us directly to a simple improvement of the proof, which is valid for any value of the cosmological constant. The original method \cite{Colling2024} requires separate argument.   

The fourth topic of our work is the application of our Master Identity to the static solutions of the EEH equation of  a non-positive cosmological constant.  After a simple derivation of the local form of the  general solution, we constructed all globally defined general solutions. We thus complete the list of solutions given in the literature \cite{Bahuaud:2022iao, Bahuaud:2023wsi, Wylie2023}. For application of original Dunajski-Lucietti method in this case see \cite{Cochran2024killing}.

In the two-dimensional case (which corresponds to horizons in four-dimensional spacetime), we study a generalization of the EEH equation which includes extremal horizon in the presence of a Maxwell field and the equations of quasi-Einstein 
spaces. We will call this class of equations, generalized extremal horizon equations. We find the general solution for a compact surface with non-zero genus. In the case of zero genus,
we examine the static case and prove the axial symmetry of the solutions. Interestingly, this latter result
is also valid for non-compact spaces with a zero first cohomology group.

In the current paper we present several new results concerning the extremity equation. In particular, we derive new static solutions (closed rotation 1-form potential) with non-vanishing rotation 1-form potential that were overlooked in the literature, and we obtain the general static solution with negative cosmological constant. We also consider a generalization of the the extremity equation in dimension $2$ which contain both EMEH and so called $m$-quasi Einstein equations and find general solution of genus $>0$. Additionally, we revisit the proof of the rigidity theorem and presented a simplification of the original argument. One advantage is that our version includes both values of the cosmological constant in a uniform way. We apply our tools provided by this proof to address the static case and fill in the gaps in the other papers on that subject.

\bigskip

In the remainder of this section, we  present the spacetime origin of the EEH equation and EMEH equation studied in this paper from Einstein's theory of the gravitational field. The mathematical formulas and definitions we invoke will not appear directly in our work, although they gave rise to the problems we study and for them the results found are relevant.  
\bigskip

Extremal isolated horizon is an $n+1$ dimensional  manifold ${\cal H}$ endowed with a symmetric, twice covariant  tensor field $g_{ab}$  of the rank  $n$ (called a degenerate metric tensor),  a covariant, torsion free derivative $D_a$, and a nowhere vanishing  vector field $\ell^a$ defined modulo rescaling by a constant, such that the following equalities are satisfied:
\begin{equation}
D_a g_{bc} = 0, \ \ \ \ell^a g_{ab} =0, \ \ \  [{\cal L}_\ell, D_a]=0,  \ \ \ \ell^aD_a \ell^b = 0.
\end{equation}

It follows from the first two equations that 
\begin{equation}
{\cal L}_\ell g_{ab}=0, \ \ \ \nabla_a\ell^b=\omega_a\ell^b, \ \ \ {\rm and}\ \ \ {\cal L}_\ell \omega_a=0
\end{equation}  
where the existence of $\omega_a$ follows from the fact that $g_{ab}$ has exactly one degenerate direction. The $1$-form  $\omega_a$ is called the rotation $1$-form potential.

    Suppose, that an extremal isolated horizon $({\cal H},g_{\alpha\beta},D_\alpha)$ is embedded in $n+2$ dimensional spacetime $(\tilde{M}, \tilde{g}_{AB})$,  such that  $g_{ab}$ and  $D_a$  are equal to the restriction of the spacetime metric tensor $\tilde{g}_{AB}$ and the corresponding covariant derivative $\tilde{\nabla}_A$, respectively, to ${\cal H}$.  Consider a $n$-dimensional section 
    \begin{equation}
         \Sigma \subset{\cal H} 
    \end{equation}
  transversal to $\ell^a$, and abusing the notation, denote by $g_{\alpha\beta}$ and $\omega_\alpha$, the pullbacks of $g_{ab}$ and $\omega_a$ to $\Sigma$.  In $\Sigma$, $g_{\mu\nu}$ becomes a metric tensor, it determines a torsion free, metric covariant derivative $\nabla_\mu$, the Riemann tensor $R^\mu{}_{\nu\alpha\beta}$ and the Ricci tensor $R_{\mu\nu}$.  On the other hand, the pullback  $\tilde{R}_{\mu\nu}$  of the spacetime Ricci tensor on $\Sigma$ is determined by $g_{\mu\nu}$ and $\omega_\mu$ in the following way
    \begin{equation}\label{id}
        \tilde{R}_{\mu\nu} = R_{\mu\nu} - 2\omega_\mu\omega_\nu - \nabla_\mu\omega_\nu - \nabla_\nu\omega_\mu. 
    \end{equation}
   If the spacetime metric tensor satisfies the vacuum Einstein equations with cosmological constant $\Lambda$,
   namely
   \begin{equation}
       \tilde{R}_{AB} = \frac{2}{n}\Lambda \tilde{g}_{AB},
   \end{equation}
  then on $\Sigma$, the identity (\ref{id}) becomes  the EEH equation of Def. \ref{extreq}  \cite{LPhigher}. 

More generally, if the spacetime $(M,\tilde{g}_{AB})$ that admits an extremal isolated horizon satisfies Einstein's equations with some additional fields, then the left hand side in (\ref{id}) will become
\begin{equation}
       \tilde{R}_{\mu\nu}  = 
       \kappa \left(\tilde{T}_{\mu\nu}-\frac{1}{2}\tilde{T}{g}_{\mu\nu}\right),
\end{equation}
where $\tilde{T}_{AB}$ is the energy-momentum tensor (including the cosmological constant) and $\tilde{T}=\tilde{g}^{AB}\tilde{T}_{AB}$. Also,  some of the fields entering ${T}_{\mu\nu}$ and restricted to $\Sigma$, may be constrained by additional equations along $\Sigma$. This is what happens when we allow Maxwell field on ${\cal H}$ and obtain the EMEH equation of Definition \ref{etremmaxeq} \cite{LPextremal}.

 \subsection{Conventions}
Given a metric tensor $g_{\alpha\beta}$,  we use  its  torsion free and metric covariant derivative $\nabla_\mu$, 
\begin{equation}
\nabla_\alpha g_{\beta\gamma} = 0, \ \ \ \ \ (\nabla_\alpha\nabla_\beta -  \nabla_\beta\nabla_\alpha)f = 0, \ \   
\end{equation}
where the second equation holds for every function $f$. The Riemann tensor $R^\alpha{}_{\beta\gamma\delta}$   and Ricci tensor $R_{\alpha\beta}$ are defined as follows.
\begin{equation}
(\nabla_\mu\nabla_\nu-\nabla_\mu\nabla_\nu)V^\rho=R^\rho_{\phantom{\rho}\chi\mu\nu}V^\chi,\quad R_{\mu\nu}=R^\rho_{\phantom{\rho}\mu\rho\nu}.
\end{equation}
The Lie derivative o a tensor $T^{\alpha...\beta}{}_{\gamma...\delta}$ with respect to a vector field $K$ will be denoted by 
${\cal L}_{K}T^{\alpha...\beta}{}_{\gamma...\delta}$. 

\section{Characters of this paper.} To save the reader's time, we begin our article with a direct definition of the extremal horizon equations considered in the paper  (Definition \ref{extreq}, \ref{etremmaxeq}).  Some of our results are more general,  concern the general properties of the Killing wave operator and the Andersson-Mars-Simon operator. We also provide definitions of these operators in advance in this section (Definition \ref{Killop}-\ref{AMSop}).

\begin{df}\label{extreq} \cite{LPhigher} {\bf The Einstein vacuum extremal horizon (EEH) equation.} Given a $n$-dimensional  manifold $\Sigma$, a metric tensor $g_{\alpha\beta}$, and a differential $1$-form $\omega_{\alpha}$ (rotation potential), the vacuum extremal horizon (EEH) equation with cosmological constant $\Lambda$ reads 
 \begin{equation}\label{theequation}
       \nabla_{(\nu}\omega_{\mu)} + \omega_\mu\omega_\nu - \frac{1}{2}R_{\mu\nu} + 
        \frac{1}{n}\Lambda g_{\mu\nu} =0,
   \end{equation}
\end{df}

In Sec. \ref{sec:case2} we will consider a generalization of the EEH equation to the  case with Maxwell field. We will restrict ourselves to  $2$-dimensional horizon sections that correspond to $4$-dimensional spacetimes. 

\begin{df}\label{etremmaxeq} \cite{LPextremal} {\bf The Einstein-Maxwell extremal horizon (EMEH) equation.}  Given a  $2$-dimensional  manifold $\Sigma$, endowed with: a metric tensor $g_{\alpha\beta}$, Hodge dual $*$, a differential $1$-form $\omega_{\alpha}  (rotation potential)$, and complex valued function $\Phi$, the Einstein-Maxwell vacuum extremal horizon (EMEH) equation with cosmological constant $\Lambda$ is the following system 
 \begin{align}
       \nabla_{(\nu}\omega_{\mu)} + \omega_\mu\omega_\nu - \frac{1}{2}R_{\mu\nu} + 
        \frac{1}{2}\Lambda g_{\mu\nu} + |\Phi|^2 g_{\mu\nu} &=0,\\
        (1+i*)\left(d + 2\omega \right)\Phi &= 0.
   \end{align}
\end{df}

Some of the results applied to more general situation.

\begin{df}\label{etremgeq}  {\bf The generalized extremal horizon (GEH) equation.}  Given a  $2$-dimensional  manifold $\Sigma$, endowed with: a metric tensor $g_{\alpha\beta}$,  a differential $1$-form $\omega_{\alpha}$,  the generalized extremal horizon (GEH) equation is the following system 
 \begin{align}
       \nabla_{(\nu}\omega_{\mu)} + a\omega_\mu\omega_\nu +f g_{\mu\nu}=0
   \end{align}
for some smooth function $f\in C(\Sigma)$ on $\Sigma$ and a constant $a\not=0$.
\end{df}

Let us notice that both Einstein vacuum and Einstein-Maxwell extremal horizons belong to this class, as well as so called $m$-quasi Einstein metrics.

Below we will also use the following operator:

\begin{df}\label{Killop} \cite{Wald} {\bf Killing "wave" operator.}
    The Killing "wave" operator $\square$ maps vector fields defined on a manifold $\Sigma$ equipped with a metric tensor $g_{\mu\nu}$ in the following way:
\begin{equation}\label{square}
\square F^\rho:=\nabla^\nu\nabla_\nu F^\rho+R^\rho{}_\mu F^\mu,
\end{equation}
where $R^\rho{}_\mu$ stands for the Ricci tensor of $g_{\mu\nu}$.
\end{df}
It was so named in the context of space-time metric tensors, but the definition does not depend on the signature of the metric, and, as we will see below, a strictly positive signature provides strong constraints.  

\begin{df}\label{AMSop} \cite{ABL1,AMS} {\bf The Andersson-Mars-Simon operator}
Given a manifold $\Sigma$ endowed with a metric tensor $g_{\mu\nu}$ and a differential $1$-form $f_\mu$,  the AMS operator 
is defined for a function $u$ by
\begin{equation}
     H u = -\nabla^\mu \nabla_\mu u+ \nabla^\mu (u f_\mu) . 
\end{equation}

\end{df}

 \section{Generally true geometric results on compact Riemannian spaces.} 
In this section, we separate those results that are generally true and are not yet related to the EEH equation studied later, but they are on the one hand important in their own right, and on the other hand, find significant use in the next section.  
 
\subsection{The Killing wave equation operator on compact Riemannian spaces.} 
 The following identities are true for the Killing "wave" operator (Def. \ref{Killop}):
\begin{equation}
\square F_\mu=2\nabla^\nu \left(\nabla_{(\mu}F_{\nu)}-\frac{1}{2}g_{\mu\nu} \nabla^\chi F_\chi\right).
\end{equation}
That implies that if $F^\mu$ is a Killing vector then the Killing wave equation is satisfied
\begin{equation}
 \square F^\mu = 0.     
\end{equation}
In general the inverse  implication is not true,  even if vector is divergence-free, namely  a divergence free vector field $F^\mu$ that satisfies the Killing equation  may not be a Killing vector field. It turns out, however, that on a compact manifold, if the signature of the metric tensor is strictly positive, then 
\begin{equation}
    \nabla_\mu F^\mu=0,\ \square F^\mu=0\Longleftrightarrow F^\mu \text{ is a Killing vector.}
\end{equation}
We formulate  now and prove a stronger version of this fact. 

\begin{thm}\label{tm:Killing}{\bf Killing vectors and Killing operators.}
Consider a compact manifold $\Sigma$ endowed with a  Riemannian  metric tensor $g_{\mu\nu}$. 
Suppose that a vector field $F^\mu$ defined on $\Sigma$ is divergence free, that is
\begin{equation*}\label{}
    \nabla_\mu F^\mu = 0,
\end{equation*}
and satisfies the following equalities, for some function $U$ and a $1$-form $A_\mu$,
\begin{equation}\label{boxF}
\square F_\nu= \nabla_\nu U+A_\nu,\quad A_\nu F^\nu=0;
\end{equation}
then $F^\nu$ is a Killing vector field and additionally $A_\mu=-\nabla_\mu U$.
\end{thm}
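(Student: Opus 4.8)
The plan is to use an integration-by-parts (Bochner-type) argument exploiting compactness and positive-definiteness. First I would pair the hypothesis $\square F_\nu = \nabla_\nu U + A_\nu$ with $F^\nu$ itself and integrate over $\Sigma$. Recalling the identity $\square F_\mu = 2\nabla^\nu(\nabla_{(\mu}F_{\nu)} - \tfrac12 g_{\mu\nu}\nabla^\chi F_\chi)$ from the previous subsection, and using $\nabla_\mu F^\mu = 0$, the left side becomes $2\nabla^\nu \nabla_{(\mu}F_{\nu)}$. Integrating $\int_\Sigma F^\mu \square F_\mu$ by parts moves one derivative onto $F^\mu$, producing $-2\int_\Sigma \nabla^\nu F^\mu \nabla_{(\mu}F_{\nu)} = -2\int_\Sigma \nabla_{(\mu}F_{\nu)}\nabla^{(\mu}F^{\nu)}$, i.e. minus twice the squared $L^2$-norm of the symmetrized gradient (the deformation tensor). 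On the right side, $\int_\Sigma F^\mu(\nabla_\mu U + A_\mu) = \int_\Sigma F^\mu \nabla_\mu U = -\int_\Sigma U \nabla_\mu F^\mu = 0$, using $A_\nu F^\nu = 0$ and the divergence-free condition. Hence $\int_\Sigma |\nabla_{(\mu}F_{\nu)}|^2 = 0$, and positive-definiteness of $g$ forces $\nabla_{(\mu}F_{\nu)} = 0$ pointwise, so $F$ is Killing.

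Once $F$ is Killing, $\square F_\nu = 0$ identically (again by the displayed identity, since the symmetrized gradient and its trace vanish). Therefore $\nabla_\nu U + A_\nu = 0$, which is exactly the claimed relation $A_\mu = -\nabla_\mu U$.

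The main subtlety I anticipate is bookkeeping in the integration by parts: one must be careful that the term $\nabla^\nu F^\mu \nabla_{(\mu}F_{\nu)}$ genuinely collapses to $\nabla_{(\mu}F_{\nu)}\nabla^{(\mu}F^{\nu)}$ (the antisymmetric part of $\nabla^\nu F^\mu$ contracts to zero against the symmetric tensor $\nabla_{(\mu}F_{\nu)}$), and that the trace term $-\tfrac12 g_{\mu\nu}\nabla^\chi F_\chi$ drops out cleanly thanks to $\nabla_\mu F^\mu = 0$ — so no Ricci curvature terms survive and no sign errors creep in. A secondary point is confirming there are no boundary terms, which is immediate since $\Sigma$ is compact without boundary. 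The role of the condition $A_\nu F^\nu = 0$ is precisely to kill the only term on the right-hand side that is not a total divergence; without it the argument would not close.
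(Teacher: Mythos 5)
Your proposal is correct and follows essentially the same route as the paper: both arguments use $\nabla_\mu F^\mu=0$ and $A_\mu F^\mu=0$ to show that $\int_\Sigma F^\mu\,\square F_\mu$ vanishes, identify it (up to sign) with $2\int_\Sigma \nabla_{(\mu}F_{\nu)}\nabla^{(\mu}F^{\nu)}$ by the divergence identity for $\square$, conclude $\nabla_{(\mu}F_{\nu)}=0$ by positive-definiteness, and then read off $A_\mu=-\nabla_\mu U$ from $\square F_\nu=0$. The only cosmetic difference is that the paper organizes the computation as a pointwise total-divergence identity before integrating, whereas you integrate by parts directly under the integral sign; the subtleties you flag (the antisymmetric part dropping out, the trace term vanishing, no boundary terms) are handled identically.
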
 

\begin{remark}
\label{rm:Killing} A consequence of this theorem  is  
\begin{equation}
    \Lie_K U=K^\mu \nabla_\mu U=-K^\mu A_\mu=0.
\end{equation}
It means that the function $U$ is also invariant.
\end{remark}

\begin{proof}
Using $\nabla^\mu F_\mu=0$ we can write
\begin{equation}
2\nabla_{(\mu}F_{\nu)}\nabla^{(\mu}F^{\nu)}=2\nabla_\mu\left(F_\nu \nabla^{(\mu}F^{\nu)}\right)-F^\nu \square F_\nu.
\end{equation}
Moreover, again using $\nabla^\mu F_\mu=0$ and $A_\mu F^\mu=0$
\begin{equation}\label{eq:F-F}
F^\nu \square F_\nu=F^\nu\nabla_\nu U=\nabla_\nu\left(F^\nu U\right),
\end{equation}
so $2\nabla_{(\mu}F_{\nu)}\nabla^{(\mu}F^{\nu)}=\nabla_\nu\left(2F_\mu \nabla^{(\mu}F^{\nu)}-F^\nu U\right)$. Integrating over compact manifold $\Sigma$
\begin{equation}
2\int_\Sigma \nabla_{(\mu}F_{\nu)}\nabla^{(\mu}F^{\nu)}=0\Longrightarrow \nabla_{(\mu}F_{\nu)}=0.
\end{equation}
The identity \eqref{eq:F-F} simplifies as $\square F_\nu=0$ to
\begin{equation}
    0=\nabla_\mu U+A_\mu,
\end{equation}
that is the last statement of the theorem.
\end{proof}

\begin{remark}
It is not difficult to complicate the right-hand side of (\ref{boxF}) even further,  namely instead of assuming $A_\mu K^\mu=0$ we can impose a condition
\begin{equation}\label{boxF'}
A_\mu F^\mu=\nabla_\mu D^\mu+E
\end{equation}
where $E\geq 0$ leaving the conclusion true. It looks a bit baroque, however it works for the EMEH equation \cite{CKL-new}. We should, however stress that in this case our simplification presented in the next section does not work straightforwardly and the original approach as in \cite{CKL-new} seems better.
\end{remark}

\subsection{The Andersson-Mars-Simon operator}
The AMS operators (Def. \ref{AMSop}) feature in the context of horizons or marginally trapped surfaces \cite{AMS, DL-rigidity}. Since they are interesting on their own, let us formulate an upgraded version of the  Andersson-Mars-Simon theorem \cite{AMS}.  

For a smooth Riemannian metric $g_{\mu\nu}$ and smooth $1$-form $f_\mu$ on a compact manifold $\Sigma$ we define an unbounded operator on $L^2(\Sigma)$
\begin{equation}
    Hu:=-\Delta u+\nabla^\mu(f_\mu u)
\end{equation}
with the domain $D(H)=H^2_2(\Sigma)$ (Sobolev space of function whose second weak derivatives are square integrable). It is a closed operator (see \cite{HormanderPDEIII}) with adjoint $H^\dagger$ equal
\begin{equation}
    H^\dagger u=-\Delta u+f_\mu\nabla^\mu u
\end{equation}
with the same domain. The theory of elliptic operators \cite{HormanderPDEIII} ensures that both $H$ and $H^\dagger$ have only discrete spectrum with poles of resolvents of finite orders. This means that for every $\mu\in \C$ the space of generalized eigenfunctions $V_\mu$ defined by
\begin{equation}
    V_\mu=\{\psi\in L^2(\Sigma)\colon \exists_{n>0}\  \psi\in D(H^n),\ (H-\mu)^n\psi=0\}
\end{equation}
is finite dimensional. Moreover, as $H$ is elliptic the spaces $V_\mu$ consists in fact of smooth functions. The same applies to $H^\dagger$.

An eigenvalue $\mu$ is simple if $V_\mu$ consists only of eigenfunctions, namely for every $\psi\in V_\mu$, $H\psi=\mu \psi$. Operator $H$ is not self-adjoint and it is not guaranteed that eigenspaces are simple.

It was noticed in \cite{AMS} and used in \cite{DL-rigidity} that we can tell much more about the spectrum thanks to the Krein-Rutman theory.

\begin{thm}\label{tm:AMS}{\bf Ground state of AMS operators}
Consider a compact manifold $\Sigma$ equipped with a Riemannian metric tensor $g_{\mu\nu}$ and a $1$-form $f_\mu$.  
The operator $H$
\begin{equation}
Hu:=-\Delta u+\nabla^\mu(f_\mu u)
\end{equation} 
has the following properties:
\begin{enumerate}
\item  It has an eigenvalue $0$ with a smooth eigenvector $\Psi$, where  $\Psi>0$ in every point of  $\Sigma$. This eigenvalue is simple and multiplicity free. Every other eigenvalue has real part bigger than $0$.
\item The same holds for $H^\dagger$. The eigenvector for eigenvalue $0$ is a constant function.
\end{enumerate}
If there exists a unitary group $U_t$ preserving $H$ (that is $U_tHU_t^{-1}=H$) and such that $U_t$ preserves the space of non-negative functions then $U_t\Psi=\Psi$.
\end{thm}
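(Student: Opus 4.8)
The strategy is to deduce everything from the Krein--Rutman / Perron--Frobenius theory applied to a single compact positivity-improving operator built from $H$. Expanding $Hu=-\Delta u+f^\mu\nabla_\mu u+(\nabla^\mu f_\mu)u$ we see that $-H$ is a smooth second order elliptic operator (with no sign restriction on its zeroth order coefficient), so $-H$ generates an analytic semigroup and, for each $t>0$, $e^{-tH}$ is a smoothing, hence compact, operator on $L^2(\Sigma)$ which, thanks to smoothing, is also a compact operator on $C(\Sigma)$. By the strong maximum principle for parabolic equations $e^{-tH}$ is \emph{positivity improving}: $0\le u\not\equiv 0$ implies $e^{-tH}u>0$ everywhere. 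The first step is then to fix $t>0$ and invoke the Krein--Rutman theorem in its strong form for positivity improving compact operators: the spectral radius $r:=r(e^{-tH})>0$ is an algebraically simple eigenvalue with a strictly positive eigenfunction $\Psi$, and the spectrum of $e^{-tH}$ meets the circle $|z|=r$ only at $z=r$.

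The next step is to transfer this information back to $H$. Since $H$ is sectorial with compact resolvent, the spectral mapping theorem gives $\sigma(e^{-tH})\setminus\{0\}=e^{-t\sigma(H)}$ with matching generalized eigenspaces; from $e^{-tH}\Psi=r\Psi$ (with $\Psi$ smooth, hence in $D(H)$) one deduces $H\Psi=\alpha\Psi$ for a single $\alpha\in\sigma(H)$ with $e^{-t\alpha}=r$, so $\operatorname{Re}\alpha=\min\{\operatorname{Re}\lambda:\lambda\in\sigma(H)\}$. Here enters the only genuinely hands-on step, which is also the one piece that is purely geometric: integrating $H\Psi=\alpha\Psi$ over the closed manifold $\Sigma$ annihilates the left-hand side, since $-\Delta\Psi$ and $\nabla^\mu(f_\mu\Psi)$ are total divergences, whence $\alpha\int_\Sigma\Psi=0$ and therefore $\alpha=0$ because $\Psi>0$. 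Thus $0\in\sigma(H)$ with positive eigenfunction $\Psi$, $r=1$, and $\min\operatorname{Re}\sigma(H)=0$. Algebraic simplicity of $0$ for $H$ follows because its generalized eigenspace embeds into the one-dimensional generalized eigenspace of $e^{-tH}$ at $1$; and for any $\lambda\in\sigma(H)$ with $\lambda\neq0$, the value $e^{-t\lambda}\in\sigma(e^{-tH})$ has modulus $\le1$, with modulus $1$ only at the simple peripheral point $1$, which is attained solely at $\lambda=0$, so $|e^{-t\lambda}|<1$, i.e.\ $\operatorname{Re}\lambda>0$. This is statement (1).

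For statement (2), the identical discussion applies to $H^\dagger=-\Delta+f^\mu\nabla_\mu$, again of the form $-\Delta$ plus a first-order operator: there is a simple ground eigenvalue with a strictly positive eigenfunction and every other eigenvalue has positive real part. Since $H^\dagger 1=0$, the positive constant function $1$ is an eigenfunction for the eigenvalue $0$; as positive eigenfunctions occur only at the (one-dimensional) ground eigenspace, the ground eigenvalue of $H^\dagger$ is $0$ and its eigenfunction is constant, consistently with $\sigma(H^\dagger)=\overline{\sigma(H)}$. Finally, for the invariance statement, let $U_t$ be unitary with $U_tHU_t^{-1}=H$ and preserving the cone of non-negative functions. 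Then $U_t$ commutes with $H$, hence maps $\ker H=\C\,\Psi$ into itself, so $U_t\Psi=c(t)\Psi$; unitarity forces $|c(t)|=1$, and since $U_t\Psi\ge0$ is in particular real-valued while $\Psi$ is real and strictly positive, $c(t)\ge0$, so $c(t)=1$ and $U_t\Psi=\Psi$.

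The main obstacle is the first step: one must certify that the concrete operator $e^{-tH}$ falls under the \emph{strong} form of Krein--Rutman, which needs the positivity improving property — supplied here by smoothness of the data together with the strong maximum principle — and, crucially, the reduction of the peripheral spectrum to the spectral radius, which is precisely what promotes the real-part estimate from $\ge0$ to $>0$. Working with the semigroup $e^{-tH}$ rather than with a resolvent $(H+\lambda_0 I)^{-1}$ is what makes this come out, since $\lambda\mapsto e^{-t\lambda}$ carries the imaginary axis off the circle $|z|=1$ whereas the map $\mu\mapsto 1/(\mu+\lambda_0)$ does not. Everything else — the spectral-mapping bookkeeping and the divergence integration pinning the ground eigenvalue to $0$ — is routine. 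This is in essence the reasoning of \cite{AMS,ABL1}; the only genuinely new ingredient is the short commutation argument for the last assertion.
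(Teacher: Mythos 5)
Your proposal is correct, but it reaches the conclusions by a genuinely different route from the paper. You derive everything from the semigroup: compactness and positivity-improvement of $e^{-tH}$ (parabolic strong maximum principle), the \emph{strong} Krein--Rutman theorem (algebraic simplicity of the spectral radius \emph{and} triviality of the peripheral spectrum), spectral mapping back to $H$, and the divergence integration $\int_\Sigma H\Psi=0$ to pin the ground eigenvalue to $0$. The paper instead takes the existence of the strictly positive principal eigenfunctions $\psi_p$ (for $H$) and $\psi_p'=1$ (for $H^\dagger$) as given from \cite{AMS}, and obtains the two upgraded claims by elementary hands-on computations: the pointwise identity $H^\dagger|u|^2=2\Re(\overline{u}H^\dagger u)-2\nabla_\mu\overline{u}\nabla^\mu u$, paired against $\psi_p$, forces any eigenvalue with $\Re\mu\le 0$ to have a constant eigenfunction and hence $\mu=0$ (giving both $\ker H^\dagger=\C\cdot 1$ and $\Re\mu>0$ otherwise); and the non-degenerate pairing $\langle\psi_p,\psi_p'\rangle>0$ rules out a Jordan block at $0$; conjugation of resolvents transfers everything to $H$. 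Your approach buys a clean, standard functional-analytic package but outsources the two delicate points (simplicity and the peripheral-spectrum reduction) to the strong form of Krein--Rutman, whereas the paper's argument is self-contained modulo the weak statement already in \cite{AMS} and exhibits explicitly the integral identity doing the work. One small point to tighten in your write-up: your claim that $|e^{-t\lambda}|=1$ is ``attained solely at $\lambda=0$'' needs the one-line observation that any other $\lambda\in\sigma(H)$ with $e^{-t\lambda}=1$ (i.e.\ $\lambda=2\pi i k/t$, $k\neq 0$) would enlarge the generalized eigenspace of $e^{-tH}$ at $1$ beyond dimension one, contradicting algebraic simplicity; this is a presentational gap, not a mathematical one. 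The final commutation argument for $U_t\Psi=\Psi$ is identical in both proofs.
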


\begin{remark}
What is not  explicitly stated in \cite{AMS} (however effectively proved there), is  that the  eigenvalue $0$ is simple (not only multiplicity free) and that in fact every other eigenvalue needs to have a bigger (not equal) than zero  real part.  For that reason we decided to provide a  comment on the proof with suitable amendments in an appendix \ref{sec:AMS-comments}.
\end{remark}

\subsection{Dunajski-Lucietti  vector field.}
\begin{prop}\label{prop:divfreeK}{\bf Dunajski-Lucietti vector.}
Consider a compact manifold $\Sigma$ equipped with a Riemannian metric tensor $g_{\mu\nu}$ and a $1$-form $f_\mu$.  
There exists a nontrivial function $\Gamma$ such that the vector field 
\begin{equation}
K_\mu:=\nabla_\nu \Gamma+\Gamma f_\nu
\end{equation}
is divergence free: 
\begin{equation}
\nabla^\mu K_\mu=0.   
\end{equation}
Moreover, such $\Gamma$  does not vanish anywhere and it is unique modulo multiplication by a constant. We can choose $\Gamma>0$. 
\begin{enumerate}
    \item If $\xi$ is a Killing vector and $\Lie_\xi f_\mu =0$, then $\Lie_\xi \Gamma=0.$
   \item \label{rm:trivKill} $K=0$ if and only if the one form $f$ is exact. Moreover, in such situation $f_\mu=-\nabla_\mu \ln \Gamma$
\end{enumerate}

\end{prop}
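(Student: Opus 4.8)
The plan is to reduce the entire statement to Theorem~\ref{tm:AMS} applied to a suitable Andersson--Mars--Simon operator. First I would rewrite the divergence as
\begin{equation*}
\nabla^\mu K_\mu=\nabla^\mu\nabla_\mu\Gamma+\nabla^\mu(\Gamma f_\mu)=\Delta\Gamma+\nabla^\mu(\Gamma f_\mu)=-\widehat H\Gamma,
\end{equation*}
where $\widehat H$ is the AMS operator of Def.~\ref{AMSop} associated with the metric $g_{\mu\nu}$ and the $1$-form $-f_\mu$, that is $\widehat Hu=-\Delta u-\nabla^\mu(f_\mu u)$. Hence the condition $\nabla^\mu K_\mu=0$ says precisely that $\Gamma\in\ker\widehat H$.

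Next I would invoke Theorem~\ref{tm:AMS} for $\widehat H$ (which is again an AMS operator, so the theorem applies verbatim with $f_\mu$ replaced by $-f_\mu$): it supplies a smooth, everywhere positive function $\Psi$ with $\widehat H\Psi=0$, and the eigenvalue $0$ is simple and multiplicity free, so $\ker\widehat H=\R\Psi$ is one-dimensional. Taking $\Gamma:=\Psi$ yields a nontrivial, nowhere vanishing solution, which we normalise to be positive; conversely any function with $\nabla^\mu K_\mu=0$ lies in $\ker\widehat H$ and is therefore a constant multiple of $\Gamma$. This establishes existence, positivity, non-vanishing and uniqueness modulo a constant in one stroke.

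For item~1, since $\Sigma$ is compact the Killing field $\xi$ is complete and generates a one-parameter group of isometries $\phi_t$; because $\Lie_\xi f=0$, each $\phi_t$ also preserves $f$, hence $-f$, so the induced unitary group $U_t$ on $L^2(\Sigma)$ commutes with $\widehat H$ and maps non-negative functions to non-negative functions. The final clause of Theorem~\ref{tm:AMS} then forces $U_t\Psi=\Psi$, i.e.\ $\Lie_\xi\Gamma=0$. For item~2, if $K_\mu=0$ identically then, since $\Gamma>0$ everywhere, $f_\mu=-\nabla_\mu\Gamma/\Gamma=-\nabla_\mu\ln\Gamma$, which is exact; conversely, if $f=dh$ then $e^{-h}>0$ annihilates the corresponding $K$ identically and is in particular divergence free, so by the uniqueness just proved $\Gamma$ is a positive multiple of $e^{-h}$, whence $K=0$ and $f_\mu=-\nabla_\mu\ln\Gamma$ for the $\Gamma$ of the proposition as well.

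I do not expect a real obstacle: the substance is entirely contained in Theorem~\ref{tm:AMS}. The only points needing a little care are bookkeeping the sign of the $1$-form when passing to $\widehat H$, and checking that the isometry flow of $\xi$ genuinely induces a unitary group commuting with $\widehat H$ and preserving positivity; both are routine, so if anything the ``hard part'' is merely recognising that the divergence-free condition is exactly the AMS kernel equation.
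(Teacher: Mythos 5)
Your proof is correct and follows essentially the same route as the paper: identify $\nabla^\mu K_\mu$ with (minus) an AMS operator acting on $\Gamma$ and read everything off from Theorem~\ref{tm:AMS}, with the same treatment of the exact case. You are in fact slightly more explicit than the paper about the sign bookkeeping (the relevant AMS operator carries $-f_\mu$) and about why the isometry flow gives a positivity-preserving unitary group commuting with the operator, which the paper leaves implicit.
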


\begin{proof}
Given arbitrary function  $\Gamma$, the divergence of $K$ is given by the action on $\Gamma$ of a differential operator, namely 
   \begin{equation}\label{divK}
\Gamma\mapsto  \nabla^\mu K_\mu=\Delta \Gamma+\nabla^\mu\left(\Gamma f_\mu\right),
\end{equation}
that belongs to the class of the AMS operators considered in Theorem \ref{tm:AMS}.  The conclusion  follows from the conclusion of Theorem \ref{tm:AMS}.

If the $1$-form $f_\mu$ is exact, then Proposition \ref{prop:divfreeK} provides a trivial vector field $K=0$.    Indeed, if there is a function $B$ such that 
\begin{equation}\label{K=0}
    f_\mu = \nabla_\mu B,
\end{equation}
then  the function
\begin{equation}
 \Gamma = {\rm exp (-B)}   
\end{equation}
corresponds to the vector field $K=0$, that is in particular divergence free. Hence $\Gamma$ is the  non-trivial function provided by the theorem, however the corresponding vector field is trivial. The opposite is also true, $K=0$ implies \eqref{K=0}. Notice, that $\Gamma>0$ thus $B=-\ln \Gamma$ is well-defined.
\end{proof}

\section{Master Identity of the  EEH equation.}
The explicitly known solutions to the EEH equation   (\ref{theequation}) admit a Killing vector field $K$ that may be written in the following form  
\begin{equation}
 K_\mu = \nabla_\mu\Gamma -2\Gamma \omega_\mu. 
\end{equation}
It  justifies introducing the following notation, 
\begin{equation}
    h_\mu := -{2}\omega_\mu, \ \ \ \ \  K_\mu = \nabla_\mu \Gamma + \Gamma h_\mu
\end{equation}
In terms of $h_\mu$, the EEH  equation (\ref{theequation}) reads
\begin{equation}\label{theequation'}
\nabla_{(\mu}h_{\nu)}-\frac{1}{2}h_\mu h_\nu +R_{\mu\nu}-\lambda g_{\mu\nu}=0
\end{equation}
where 
\begin{equation}
    \lambda := \frac{n}{2}\Lambda. 
\end{equation}
The key reason for many properties of the solutions to the EEH equation is the following identity:
\begin{lm}\label{lm:distiled}{\bf Master Identity of the EEH equation.}
Suppose  $(h,g)$ satisfies the equation \eqref{theequation'} with a constant $\lambda$ on a manifold $\Sigma$. For arbitrary  function $\Gamma$ defined on $\Sigma$, define the following vector field
\begin{equation}\label{K}
K_\mu:=\nabla_\mu \Gamma+\Gamma h_\mu.
\end{equation}
The following identity holds:
\begin{equation}
\square K_\mu=(\nabla^\nu K_\nu) h_\mu+\nabla_\mu V-\Omega_{\mu\nu} K^\nu,
\end{equation}
where $\square$ is the Killing "wave" operator, $V=\Delta \Gamma+2\lambda \Gamma$ and $\Omega_{\mu\nu}:=2\nabla_{[\mu}h_{\nu]}$.
\end{lm}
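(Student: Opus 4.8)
The plan is a direct tensor computation, organized so that the case $\Gamma\equiv 1$ carries the essential content. First I would split $K_\mu=\nabla_\mu\Gamma+\Gamma h_\mu$ and apply $\square$ termwise. For the gradient part the Bochner commutation identity $\nabla^\nu\nabla_\nu\nabla_\mu\Gamma=\nabla_\mu\Delta\Gamma+R_\mu{}^\nu\nabla_\nu\Gamma$ gives $\square(\nabla_\mu\Gamma)=\nabla_\mu\Delta\Gamma+2R_\mu{}^\nu\nabla_\nu\Gamma$, while the Leibniz rule gives $\square(\Gamma h_\mu)=(\Delta\Gamma)h_\mu+2\nabla^\nu\Gamma\,\nabla_\nu h_\mu+\Gamma\,\square h_\mu$. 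In the cross term I would write $\nabla_\nu h_\mu=\nabla_{(\nu}h_{\mu)}-\tfrac12\Omega_{\mu\nu}$ and substitute the symmetric derivative from \eqref{theequation'}, namely $\nabla_{(\mu}h_{\nu)}=\tfrac12 h_\mu h_\nu-R_{\mu\nu}+\lambda g_{\mu\nu}$. The two Ricci terms $2R_\mu{}^\nu\nabla_\nu\Gamma$ and $-2R_{\mu\nu}\nabla^\nu\Gamma$ cancel, the $\lambda$-term combines with $\nabla_\mu\Delta\Gamma$ into $\nabla_\mu V$ (using that $\lambda$ is constant, with $V=\Delta\Gamma+2\lambda\Gamma$), and one is left with
\[
\square K_\mu=\nabla_\mu V+(\Delta\Gamma)h_\mu+(h_\nu\nabla^\nu\Gamma)h_\mu-\Omega_{\mu\nu}\nabla^\nu\Gamma+\Gamma\,\square h_\mu .
\]

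The remaining ingredient is the $\Gamma\equiv 1$ identity $\square h_\mu=(\nabla^\nu h_\nu)h_\mu-\Omega_{\mu\nu}h^\nu$, which is exactly the assertion of the lemma with $K_\mu=h_\mu$ and $V=2\lambda$ constant. I would obtain it by taking the divergence of \eqref{theequation'}: using $\nabla^\nu\nabla_{(\mu}h_{\nu)}=\tfrac12\nabla_\mu\nabla^\nu h_\nu+\tfrac12R_\mu{}^\nu h_\nu+\tfrac12\nabla^\nu\nabla_\nu h_\mu$ (via the commutator $\nabla^\nu\nabla_\mu h_\nu=\nabla_\mu\nabla^\nu h_\nu+R_\mu{}^\nu h_\nu$), the contracted Bianchi identity $\nabla^\nu R_{\mu\nu}=\tfrac12\nabla_\mu R$, and the trace of \eqref{theequation'}, $R=n\lambda+\tfrac12 h_\nu h^\nu-\nabla^\nu h_\nu$, so that $\nabla_\mu R=h^\nu\nabla_\mu h_\nu-\nabla_\mu\nabla^\nu h_\nu$. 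All second-derivative terms then telescope and $\square h_\mu=\nabla^\nu\nabla_\nu h_\mu+R_\mu{}^\nu h_\nu$ collapses to $(\nabla^\nu h_\nu)h_\mu-\Omega_{\mu\nu}h^\nu$.

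Substituting this back into the displayed expression for $\square K_\mu$ finishes the proof: since $\nabla^\nu K_\nu=\Delta\Gamma+h_\nu\nabla^\nu\Gamma+\Gamma\nabla^\nu h_\nu$ and $K^\nu=\nabla^\nu\Gamma+\Gamma h^\nu$, the $h$-terms regroup as $(\nabla^\nu K_\nu)h_\mu-\Omega_{\mu\nu}K^\nu$, leaving $\square K_\mu=(\nabla^\nu K_\nu)h_\mu+\nabla_\mu V-\Omega_{\mu\nu}K^\nu$. Conceptually nothing here is deep; the only place demanding care is the sign and index bookkeeping in the two Riemann-tensor commutators and in the contracted Bianchi identity under the curvature conventions fixed above — in particular one must check that the Ricci contributions from $\square(\nabla_\mu\Gamma)$ and from substituting the EEH equation cancel rather than reinforce each other, and that the trace relation for $R$ enters with the correct sign so that the higher-derivative terms genuinely cancel.
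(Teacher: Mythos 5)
Your proposal is correct and follows essentially the same route as the paper's Appendix~\ref{sec:MI-derivation}: split $K_\mu$ into $\nabla_\mu\Gamma$ and $\Gamma h_\mu$, use the Bochner commutation and Leibniz rules, substitute the symmetric part of $\nabla_\nu h_\mu$ from \eqref{theequation'}, and feed in the auxiliary identity $\square h_\mu=(\nabla^\nu h_\nu)h_\mu-\Omega_{\mu\nu}h^\nu$. The only (cosmetic) difference is that the paper derives this auxiliary identity by taking the divergence of the trace-adjusted equation written with the divergence-free Einstein tensor, whereas you take the divergence of \eqref{theequation'} directly and invoke the contracted Bianchi identity together with the trace of the equation — these are the same computation.
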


For the brevity of our presentation, we postpone a proof of this identity to Appendix \ref{sec:MI-derivation}. It is based on direct computation using only \eqref{theequation'}. 

An identity obtained by contracting  Master Identity of the EEH equation  with the Dunajski-Lucietti vector field $K^\mu$ is derived in \cite{DL-rigidity} and used intensively in their work. The full  Master Identity derived above  is stronger, it contains the part orthogonal to $K$. That part will be relevant in  our new arguments and results presented in  Secs. \ref{sec:rigidity} below.     

\section{Rigidity of solutions to the EEH equation.}\label{sec:rigidity}
Combining Master  Identity satisfied by solutions of the EEH equation (Lemma \ref{lm:distiled}), with Proposition \ref{prop:divfreeK}, and with the sufficient condition for the existence of Killing vector field of Theorem \ref{tm:Killing} immediately implies the following result, which was obtained originally in \cite{DL-rigidity} and \cite{Colling2024}. Our argument based directly on Master Identity allows to simplify the proof and make it straightforward:

\begin{prop}\label{pr:rigiditya}{\bf Rigidity}
Suppose a Riemannian metric tensor $g_{\mu\nu}$ and a non-exact differential $1$-form $h_\mu$ defined on a compact manifold $\Sigma$ satisfy the  equation 
\begin{equation}
\nabla_{(\mu}h_{\nu)}-\frac{1}{2}h_\mu h_\nu +R_{\mu\nu}-\lambda g_{\mu\nu}=0
\end{equation}
with a constant $\lambda$. Then,
\begin{enumerate}
    \item The metric tensor and $1$-form admit a symmetry vector $K^\mu$,
    \begin{equation}
        \Lie_K g=0,\ \Lie_K h=0,
    \end{equation}
    where the vector field $K^\mu$, can be written in the form 
\begin{equation}
 K_\mu = \nabla_\mu\Gamma +\Gamma h_\mu, 
\end{equation} 
where the function $\Gamma$ is unique up to rescalling by constant, and it nowhere vanishes. Moreover, $\Lie_K\Gamma=0$. Vector $K$ is nonzero if and only if form $h$ is not exact.
\item Function $A$ defined by
\begin{equation}
    A:=\Delta \Gamma+2\lambda \Gamma-\frac{K^\mu K_\mu}{\Gamma},
\end{equation}
is constant.
\end{enumerate}
\end{prop}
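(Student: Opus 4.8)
The plan is to feed the Master Identity (Lemma \ref{lm:distiled}) into the two compact-manifold results proved above, and then extract the remaining statements by direct computation; the sign of $\lambda$ will enter only through a single dichotomy at one step.

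\emph{Producing the Killing field.} I would first apply Proposition \ref{prop:divfreeK} with $f_\mu=h_\mu$ to obtain the nowhere-vanishing $\Gamma$ (normalised so that $\Gamma>0$, unique up to a positive constant) for which $K_\mu=\nabla_\mu\Gamma+\Gamma h_\mu$ is divergence free; since $h$ is not exact, item 2 of that proposition gives $K\not\equiv 0$. Feeding $\nabla^\nu K_\nu=0$ into Lemma \ref{lm:distiled} collapses the Master Identity to $\square K_\mu=\nabla_\mu V-\Omega_{\mu\nu}K^\nu$ with $V=\Delta\Gamma+2\lambda\Gamma$ and $\Omega_{\mu\nu}=2\nabla_{[\mu}h_{\nu]}$, and antisymmetry of $\Omega$ makes $\Omega_{\mu\nu}K^\nu$ orthogonal to $K$. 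Theorem \ref{tm:Killing}, applied with $U=V$ and $A_\mu=-\Omega_{\mu\nu}K^\nu$, then yields that $K$ is Killing (so $\Lie_K g=0$) and in addition $\Omega_{\mu\nu}K^\nu=\nabla_\mu V$, while Remark \ref{rm:Killing} gives $\Lie_K V=0$. This already proves existence of $K$; uniqueness and non-vanishing of $\Gamma$, and the equivalence $K\neq 0\Leftrightarrow h$ non-exact, are exactly Proposition \ref{prop:divfreeK}.

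\emph{Invariance of $h$ and $\Gamma$.} Cartan's formula together with $dh=\Omega$ and $\iota_K\Omega=-dV$ (the latter from $\Omega_{\mu\nu}K^\nu=\nabla_\mu V$) gives $\eta:=\Lie_K h=d\psi$ with $\psi:=\iota_K h-V$. Taking $\Lie_K$ of \eqref{theequation'} and using $\Lie_K g=0$, $\Lie_K R_{\mu\nu}=0$ and $[\Lie_K,\nabla]=0$ for a Killing field yields the linearised equation $\nabla_{(\mu}\eta_{\nu)}=h_{(\mu}\eta_{\nu)}$, i.e. $\nabla_\mu\nabla_\nu\psi=h_{(\mu}\nabla_{\nu)}\psi$ and in particular $\Delta\psi=h^\mu\nabla_\mu\psi$. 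Here I split on the sign of $\lambda$. For $\lambda\ge 0$: integrate the Bochner identity for $\psi$ over $\Sigma$, insert the Hessian identity and the expression for $R_{\mu\nu}$ coming from \eqref{theequation'}, and integrate by parts; the identity collapses to
\begin{equation*}
\int_\Sigma\Big(|h|^2|\nabla\psi|^2+\tfrac32(h^\mu\nabla_\mu\psi)^2+\lambda|\nabla\psi|^2\Big)=0,
\end{equation*}
which forces $\nabla\psi=0$ (directly when $\lambda>0$; when $\lambda=0$ via $\Delta\psi=h^\mu\nabla_\mu\psi=0$ on a closed manifold), hence $\eta=0$, and then Proposition \ref{prop:divfreeK}(1) gives $\Lie_K\Gamma=0$. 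For $\lambda\le 0$: set $x:=\Lie_K\Gamma$; from $\Lie_K V=0$ and $[\Lie_K,\Delta]=0$ one has $\Delta x+2\lambda x=0$, and $\int_\Sigma x=\int_\Sigma\nabla_\mu(\Gamma K^\mu)=0$, so testing against $x$ gives $\int_\Sigma|\nabla x|^2=2\lambda\int_\Sigma x^2\le 0$; hence $x$ is constant, so $x=0$, and $\eta_\mu=0$ follows from $0=\Lie_K K_\mu=\nabla_\mu x+x h_\mu+\Gamma\eta_\mu$. Either way $\Lie_K g=\Lie_K h=\Lie_K\Gamma=0$, which is statement 1.

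\emph{Constancy of $A$.} With $x=\Lie_K\Gamma=0$ in hand I would differentiate $A=V-K^\nu K_\nu/\Gamma$ directly: using $\nabla_\mu\Gamma=K_\mu-\Gamma h_\mu$, the Killing identity $\nabla_\mu K_\nu=K_{[\mu}h_{\nu]}+\tfrac{\Gamma}{2}\Omega_{\mu\nu}$ (obtained by antisymmetrising $\nabla_\mu(\nabla_\nu\Gamma+\Gamma h_\nu)$), and $\Omega_{\mu\nu}K^\nu=\nabla_\mu V$, a short computation gives $\nabla_\mu(K^\nu K_\nu/\Gamma)=\nabla_\mu V-xK_\mu/\Gamma^2$, so $\nabla_\mu A=xK_\mu/\Gamma^2=0$ and $A$ is constant. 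I expect the only genuine obstacle to be the $\lambda\ge 0$ branch of the invariance step: the Ricci term in the Bochner identity carries no a priori sign, so one is forced to use \eqref{theequation'} itself to trade it for the manifestly nonnegative quantities $|h|^2|\nabla\psi|^2$ and $(h^\mu\nabla_\mu\psi)^2$; everything else is the two black-box theorems plus bookkeeping.
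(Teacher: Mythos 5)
Your proposal is correct, and its skeleton coincides with the paper's: Proposition \ref{prop:divfreeK} supplies $\Gamma$ and the divergence-free $K$, the Master Identity plus Theorem \ref{tm:Killing} makes $K$ Killing and yields $\Omega_{\mu\nu}K^\nu=\nabla_\mu V$, Cartan's formula gives $\Lie_K h=d\psi$ with $\psi=\iota_Kh-V$, and the linearised equation $\nabla_{(\mu}\psi_{,\nu)}=h_{(\mu}\nabla_{\nu)}\psi$ follows; your direct computation of $\nabla_\mu A$ via the identity $\nabla_\mu K_\nu=K_{[\mu}h_{\nu]}+\tfrac{\Gamma}{2}\Omega_{\mu\nu}$ also checks out. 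The one place you genuinely diverge is the step showing $\psi$ is constant. You split on the sign of $\lambda$, using an integrated Bochner identity (whose coefficients $|h|^2|\nabla\psi|^2+\tfrac32(h\cdot\nabla\psi)^2+\lambda|\nabla\psi|^2$ I have verified) for $\lambda\ge0$ and a separate spectral argument for $\Lie_K\Gamma$ when $\lambda\le0$. The paper instead takes the trace of the linearised equation --- the relation $\Delta\psi=h^\mu\nabla_\mu\psi$ that you already wrote down --- recognises it as ${H'}^\dagger\psi=0$ for the AMS-type operator, and invokes Theorem \ref{tm:AMS} (item 2: the kernel of $H^\dagger$ consists of constants; equivalently, the strong maximum principle for an elliptic equation with no zeroth-order term on a closed manifold) to conclude $\psi=\const$ in one stroke, for every $\lambda$. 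This is precisely the point the paper advertises as the simplification over \cite{Colling2024}: the sign dichotomy you reintroduce is avoidable, and your $\lambda\ge0$ Bochner computation, while valid, is the laborious route. What your version buys is self-containedness at this step (no appeal to the Krein--Rutman machinery behind Theorem \ref{tm:AMS}), at the cost of two separate arguments and a heavier integration-by-parts bookkeeping.
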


\begin{remark}
Let us notice that $A$ is constant also in the case of exact $h_\mu$, however in this case $K^\mu=0$.
Let us notice that constancy of $A$ is exactly, what is needed in the proof of symmetry enhancement in \cite{DL-rigidity}.
\end{remark}

\begin{proof}
We consider $\Gamma$ and $K^\mu$ provided by Lucietti-Dunajski construction (it satisfies $\nabla_\mu K^\mu=0$). The master identity gives
\begin{equation}
    \square K_\mu=\nabla_\mu V+A_\mu,\quad A_\mu=-\Omega_{\mu\nu} K^\nu.
\end{equation}
Theorem \ref{tm:Killing} shows that $K^\mu$ is a Killing vector because $A_\mu K^\mu=0$. Notice that $K^\mu\not=0$ if $h_\mu$ is not exact. 

In order to prove that $\Lie_K h=0$ is actually true for arbitrary value of $\lambda$,  we make departure from the Dunajski-Lucietti proof, and present our own methods.  Taking advantage of our un-contracted identity of Lemma  \ref{lm:distiled},  knowing that $K$ is the Killing vector, hence $\square K_\nu=0$, we obtain also 
\begin{equation}
0=\nabla_\mu V-\Omega_{\mu\nu}K^\nu. 
\end{equation}
This equation can be written in the form language as
\begin{equation}
dV+K\llcorner dh=0,\quad h=h_\mu dx^\mu.
\end{equation}
For brevity let us denote in this proof $\dot{h}=\Lie_K h$.
The Cartan formula gives us
\begin{equation}\label{U}
\dot{h}:=\Lie_Kh=d\left(K\llcorner h\right)+K\llcorner dh=dU,\quad U=K_\mu h^\mu-V.
\end{equation}
Our goal is to show that the function $U$ is constant.

The Lie derivative of the EEH  equation \eqref{theequation'} gives
\begin{equation}
\nabla_{(\mu}\dot{h}_{\nu)}-h_{(\mu}\dot{h}_{\nu)}=0.
\end{equation}
The contraction after using $\dot{h}_\mu=\nabla_\mu U$ can be written in terms of an AMS-type operator (Definition \ref{AMSop})
\begin{equation}
{H'}^\dagger U=0,\quad {H'}^\dagger U:=-\Delta U+h^\mu \nabla_\mu U.
\end{equation}
Hence, we can apply now Theorem \ref{tm:AMS}  and infer that 
\begin{equation}
    U = \const .
\end{equation}
This property holds independently whether $K^\mu$ in nontrivial or not. Let us notice that Theorem \ref{tm:AMS} shows that $\Lie_K\Gamma=0$
because both metric and $1$-form is preserved by $K^\mu$.

Finally, we compute
\begin{equation}
U=K^\mu\frac{K_\mu-\nabla_\mu \Gamma}{\Gamma}-\Delta \Gamma-2\lambda \Gamma-=\frac{K^\mu K_\mu}{\Gamma}-\Delta \Gamma-2\lambda \Gamma.
\end{equation}
where we used $K^\mu\nabla_\mu \Gamma=0$. This shows that $A=-U$ is constant.
\end{proof}

\section{Static solutions of the EEH equation.}
A special class of solutions  $(\Sigma, g_{\mu\nu},h_\mu)$ to the EEH equation (\ref{theequation'}) for a compact manifold $\Sigma$ and  Riemannian metric tensor $g_{\mu\nu}$,  are non-rotating solutions (also called "static"), namely such that
\begin{equation}
    \Omega_{\mu\nu} := 2\nabla_{[\mu} h_{\nu]} = 0. 
\end{equation}
Our Master Identity of Lemma \ref{lm:distiled} is useful also in this  case. We notice that if $\Omega_{\mu\nu}=0$, then Master Identity implies
\begin{equation}\label{DeltaGamma}
0=\square K_\nu=\nabla_\nu V\Longrightarrow \Delta \Gamma+2\lambda \Gamma=\const.
\end{equation}
Let us focus now on the case
\begin{equation}\label{lambdale0}
\lambda\leq 0.     
\end{equation}
The equality (\ref{DeltaGamma}) implies for the positive defined function $\Gamma$ given by Proposition \ref{prop:divfreeK}, that 
\begin{equation}\label{hck}
    \Gamma=c=\const>0 \ \ \ \ {\rm and}\ \ \ \ h_\mu=c^{-1}K_\nu.
\end{equation}
Since $K^\mu=0$ when $h_\mu$ is exact, it follows that 
\begin{equation}
h_\mu \ {\rm exact}\ \ \Rightarrow \ \ h_\mu=0.     
\end{equation}
\begin{prop}
The general solution $(g_{\mu\nu},h_\mu)$ to the EEH equation (\ref{theequation'}) on a compact manifold with $\lambda\le 0$ and  $h_\mu$ exact, is such that
\begin{equation}
    h_\mu = 0, \ \ \ \ {\rm and} \ \ \ \ R_{\mu\nu}=\lambda g_{\mu\nu}. 
\end{equation}
\end{prop}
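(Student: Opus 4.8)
The plan is to reduce the statement to the computations already carried out just above it. If $h_\mu$ is exact then $\Omega_{\mu\nu}=2\nabla_{[\mu}h_{\nu]}=0$, so the solution is automatically static and the discussion culminating in \eqref{DeltaGamma}--\eqref{hck} applies. Concretely, I take the positive function $\Gamma$ and the vector field $K_\mu=\nabla_\mu\Gamma+\Gamma h_\mu$ furnished by Proposition \ref{prop:divfreeK} for the $1$-form $f_\mu=h_\mu$; since $\nabla^\mu K_\mu=0$ and $\Omega_{\mu\nu}=0$, the Master Identity of Lemma \ref{lm:distiled} collapses to $\nabla_\mu V=0$, that is $\Delta\Gamma+2\lambda\Gamma=\const$, which is precisely \eqref{DeltaGamma}.

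The one genuinely substantive step, and the place where $\lambda\le 0$ enters, is promoting $\Delta\Gamma+2\lambda\Gamma=\const$ to $\Gamma=\const$. I would argue by the maximum principle: writing $\Delta\Gamma=\const-2\lambda\Gamma$ and evaluating at a maximum and at a minimum of $\Gamma$ on the compact $\Sigma$ gives $2\lambda\,\Gamma_{\min}\le\const\le 2\lambda\,\Gamma_{\max}$; for $\lambda<0$ this forces $\Gamma_{\max}\le\Gamma_{\min}$, hence $\Gamma$ is constant, while for $\lambda=0$ integrating $\Delta\Gamma=\const$ over $\Sigma$ shows $\const=0$, so $\Gamma$ is harmonic and therefore constant. (Equivalently: multiply $\Delta\Gamma+2\lambda\Gamma=\const$ by $\Gamma$ and integrate by parts to get $\int_\Sigma|\nabla\Gamma|^2=2\lambda\big(\int_\Sigma\Gamma^2-\mathrm{Vol}(\Sigma)^{-1}(\int_\Sigma\Gamma)^2\big)\le 0$, whence $\nabla\Gamma=0$ by Cauchy--Schwarz.) This is the step I would flag as the main point, although it is short; the real work sits in Lemma \ref{lm:distiled}, Proposition \ref{prop:divfreeK} and Theorem \ref{tm:AMS}, which I am entitled to invoke.

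With $\Gamma=c=\const>0$ we have $K_\mu=c\,h_\mu$. On the other hand, by part \ref{rm:trivKill} of Proposition \ref{prop:divfreeK}, exactness of $h$ is equivalent to $K=0$; hence $h_\mu=0$, which is exactly the implication ``$h_\mu$ exact $\Rightarrow h_\mu=0$'' recorded above the statement. Finally I substitute $h_\mu=0$ into the EEH equation \eqref{theequation'}: the terms $\nabla_{(\mu}h_{\nu)}$ and $\frac{1}{2}h_\mu h_\nu$ drop out, leaving $R_{\mu\nu}-\lambda g_{\mu\nu}=0$, which is the assertion. Conversely, any metric with $R_{\mu\nu}=\lambda g_{\mu\nu}$ together with $h_\mu\equiv 0$ solves \eqref{theequation'}, so this description is exhaustive and ``general'' as claimed.
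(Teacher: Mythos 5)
Your proof is correct and follows essentially the same route as the paper: the Master Identity with $\Omega_{\mu\nu}=0$ yields $\Delta\Gamma+2\lambda\Gamma=\const$, the maximum-principle (or integration-by-parts) step that you spell out — and which the paper leaves implicit — forces $\Gamma=\const$ when $\lambda\le 0$, and Proposition \ref{prop:divfreeK}(2) then gives $K=0$, hence $h_\mu=\Gamma^{-1}K_\mu=0$ and $R_{\mu\nu}=\lambda g_{\mu\nu}$. One cosmetic remark: the collapse of the Master Identity to $\nabla_\mu V=0$ requires $\square K_\mu=0$ in addition to $\nabla^\mu K_\mu=0$ and $\Omega_{\mu\nu}=0$; this is immediate here since $K\equiv 0$ for exact $h_\mu$ (or, alternatively, since Theorem \ref{tm:Killing} makes $K$ a Killing vector), but it deserves an explicit sentence.
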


Now, let us assume, that $h_\mu$ is closed however not exact. Then $K_\mu$ in (\ref{hck}) is not identically zero, hence $h^\mu$ becomes a Killing vector field itself making it covariantly constant,  
\begin{equation}
\nabla_\mu h_\nu=0.
\end{equation}
A Riemannian space that admits a covariantly constant vector field  $h^\mu$ is locally the Cartesian product of (orthogonal to each other) another Riemannian space $(\Sigma', g'_{\mu\nu})$ and a line $(\R, g_{vv}dv^2)$,
\begin{equation}\label{eq:splitting}
\Sigma=\Sigma'\times \R,\quad g=g'\oplus g_{vv}dv^2,\ h=0\oplus dv.
\end{equation}
To learn more about $g'_{\mu'\nu'}$ and $g_{vv}$, let us go back to the EEH equation, that reads now
\begin{equation}
R_{\mu\nu}=\frac{1}{2}h_\mu h_\nu +\lambda g_{\mu\nu},
\end{equation}
and splits into the $\Sigma'$ and $\R$ parts, namely
\begin{equation}\label{eq:splitting'}
R'_{\mu'\nu'}=\lambda g'_{\mu'\nu'}, \ \ \ \ {\rm and}\ \ \ \ \ 0=\frac{1}{2} +\lambda g_{vv}.
\end{equation}
Notice, that in particular it follows, that for $h_\mu$ closed however not exact,  the inequality  (\ref{lambdale0}) should be replaced by
\begin{equation}\label{lambda<0}
\lambda< 0,     
\end{equation}
and $g_{vv}=-\frac{1}{2\lambda}$.

\subsection{General static solution}

In general, however, the  splitting  (\ref{eq:splitting},\ref{eq:splitting'}) is not global. Let us therefore turn, to the global analysis.  
Let $\phi_t:\Sigma\rightarrow\Sigma$ denotes the flow generated by $h^\mu$.

\begin{lm}
If $h_\mu$ is non-trivial then there exists $T\not=0$ such that $\phi_T$ is identical. In particular, every orbit is a closed circle.
\end{lm}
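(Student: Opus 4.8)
The plan is to prove that the Lie algebra of Killing fields of $(\Sigma,g)$ is exactly the one–dimensional space spanned by $h^\mu$, and then to conclude by elementary compact Lie group theory. Recall that in the case under consideration $h$ is covariantly constant ($\nabla_\mu h_\nu=0$), we have $\lambda<0$, the squared norm $|h|^2=g^{\mu\nu}h_\mu h_\nu=-2\lambda$ is a positive constant, $h^\mu$ is a Killing vector field, and the equation \eqref{theequation'} has reduced to $R_{\mu\nu}=\tfrac12 h_\mu h_\nu+\lambda g_{\mu\nu}$.

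First I would observe that $(\Sigma,g)$ has non-positive Ricci curvature with a one–dimensional kernel: for a unit vector $e$ the Cauchy--Schwarz inequality $(h_\mu e^\mu)^2\le |h|^2$ gives $R_{\mu\nu}e^\mu e^\nu=\tfrac12(h_\mu e^\mu)^2+\lambda\le\tfrac12|h|^2+\lambda=0$, with equality precisely when $e$ is proportional to $h^\mu$. Now apply Bochner's theorem: on a compact Riemannian manifold with $\mathrm{Ric}\le 0$ every Killing field $\xi$ is parallel and satisfies $\mathrm{Ric}(\xi,\xi)\equiv 0$. The latter equality forces $\xi$ to be pointwise proportional to $h^\mu$, say $\xi_\nu=f\, h_\nu$; since $\xi$ and $h$ are parallel, $\nabla_\mu\xi_\nu=(\nabla_\mu f)h_\nu=0$, and contracting with $h^\nu$ and using $|h|^2\neq 0$ gives $\nabla f\equiv 0$, so $\xi$ is a constant multiple of $h^\mu$. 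Hence the space of Killing fields of $(\Sigma,g)$ is exactly $\{c\,h^\mu:c\in\R\}$, one–dimensional.

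Since $h$ is Killing and $\Sigma$ is compact, its flow $\phi_t$ is a one–parameter group of isometries defined for all $t\in\R$, and it is nontrivial because $h$ vanishes nowhere. By the Myers--Steenrod theorem $\mathrm{Isom}(\Sigma)$ is a compact Lie group, so the closure of $\{\phi_t:t\in\R\}$ is a compact connected abelian subgroup, i.e. a torus $\mathbb{T}^k$ with $k\ge 1$; the fundamental vector fields of its action on $\Sigma$ are Killing fields spanning a $k$–dimensional space, so by the previous paragraph $k=1$ and the closure is a circle $S^1$. The continuous homomorphism $\R\to S^1$, $t\mapsto\phi_t$, has path-connected and hence full image, so its kernel is $T\Z$ for some $T>0$, that is $\phi_T=\mathrm{id}$. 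Finally, since $h(p)\neq 0$ the orbit map $t\mapsto\phi_t(p)$ has stabiliser $T_p\Z$ with $0<T_p\le T$ and factors through an injective immersion of the compact manifold $\R/T_p\Z$, hence is an embedded circle. The only non-formal ingredient is the combination of the Ricci computation with Bochner's theorem, which I expect to be the heart of the argument and which is precisely where $\lambda<0$ enters: for $\lambda=0$ the flat torus with an irrational constant field is a parallel but non-periodic counterexample, and there $\mathrm{Ric}\equiv 0$, so the kernel of the Ricci tensor is all of $T\Sigma$ and the dimension count collapses.
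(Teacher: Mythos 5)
Your proof is correct and follows essentially the same route as the paper: both rest on the compactness of the isometry group, the torus-closure argument for the flow of $h^\mu$, and a Bochner-type use of $\lambda<0$ to exclude additional Killing fields. The only difference is packaging — you invoke Bochner's theorem to classify the full Killing algebra as $\R h^\mu$ via the kernel of the Ricci tensor, whereas the paper only rules out a second Killing field $F$ commuting with $h^\mu$ (after orthogonalizing it against $h$) by integrating $F^\mu\square F_\mu=0$, which is the same integral inequality that underlies Bochner's theorem.
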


\begin{proof}
The group of isometries $G$ of a compact manifold is a compact Lie group  (see \cite{Kobayashi1995}). In particular the flow $\phi_t$ is either periodic or the maximal torus in $G$ is at least two dimensional (and $\phi_t$ is winding in the torus incommensurable). In the second case there exists another Killing vector $F$ such that $[F,h]=0$. We would like to exclude this case.  We will do it by application of the Bochner technique \cite{Bochner46}.

Let us notice that
\begin{equation}
\nabla_\nu(h^\mu F_\mu)=h^\mu \nabla_\nu F_\mu=-h^\mu \nabla_\mu F_\nu+F^\mu \nabla_\mu h_\nu=0,\Longrightarrow c:=h^\mu F_\mu=\const
\end{equation}
We can replace $F$ by another Killing vector ${F'}^\mu=F^\mu-\frac{c}{\sqrt{-2\lambda}}h^\mu$, that is also commuting with $h^\mu$. We can thus assume $c=F^\mu h_\mu=0$.

We contract the EEH equation with two $F^\mu$ vectors obtaining
\begin{equation}
R_{\mu\nu} F^\mu F^\nu=\lambda F_\mu F^\mu.
\end{equation}
Killing equation provides
\begin{equation}
0=F^\mu \square F_\mu=F^\mu \Delta F_\mu+\lambda F_\mu F^\mu.
\end{equation}
Integrating this formula over $\Sigma$ (and integrating by parts first term) we obtain
\begin{equation}
0=-\int_\Sigma \nabla_\mu F_\nu \nabla^\mu F^\nu-\lambda F_\mu F^\mu.
\end{equation}
As $\lambda<0$ for $h_\mu\not\equiv0$, the integrand is nonnegative thus identically zero. This means that $F^\mu=0$.

The flow of $\phi_t$ is thus periodic and the orbits are closed circles.
\end{proof}

Let us consider a static solution $(\Sigma,g,h)$ with $h\not=0$. For a closed form we can define a homomorphism from the first homotopy group $\pi_1(\Sigma)$ into group $\R$
\begin{equation}
    \Psi_\R\colon \pi_1(\Sigma)\rightarrow \R,
\end{equation}
by the following formula
\begin{equation}
    \Psi(m)=\int_{\gamma_m}h,
\end{equation}
where $\gamma_m$ is any close loop representant of $m\in \pi_1(\Sigma)$.

We define a particular element in $\pi_1(\Sigma)$. Let $T_h>0$ be a minimal $t\in \R_+$ such that $\phi_t={\mathbb I}$. For every $x\in \Sigma$,
\begin{equation}
    \gamma_h\colon [0,T_h]\ni t\rightarrow \phi_t(x)
\end{equation}
is a closed loop. Its class $[\gamma_h]$ in $\pi_1(\Sigma)$ does not depend on a choice of $x$.

\begin{lm}
The subgroup of $\pi_1(\Sigma)$ generated by $[\gamma_h]$ is central and isomorphic to ${\mathbb Z}$.
\end{lm}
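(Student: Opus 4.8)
The plan is to show two things: that $[\gamma_h]$ is central in $\pi_1(\Sigma)$, and that the cyclic subgroup it generates is infinite (hence isomorphic to $\mathbb{Z}$).

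For centrality, the key observation is that $\gamma_h$ is an \emph{orbit} of the isometric flow $\phi_t$ generated by the Killing field $h^\mu$, and its homotopy class is independent of the base point $x$ (as already noted in the excerpt). I would make this precise as follows: given any loop $\alpha$ based at $x$, the flow $\phi_t$ carries $\alpha$ to a family of loops $\phi_t \circ \alpha$ based at $\phi_t(x)$, giving a free homotopy between $\alpha$ (based at $x$) and $\phi_{T_h}\circ\alpha = \alpha$ (based at $\phi_{T_h}(x) = x$ since $\phi_{T_h} = \mathbb{I}$). Tracking base points, this free homotopy is realized by sliding the base point around $\gamma_h$, which is exactly the statement that $[\gamma_h]\cdot[\alpha] = [\alpha]\cdot[\gamma_h]$ in $\pi_1(\Sigma, x)$. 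Concretely: the map $(s,t)\mapsto \phi_t(\alpha(s))$ on $[0,1]\times[0,T_h]$ is a homotopy of loops whose boundary behavior yields the commutation relation $\gamma_h * \alpha \simeq \alpha * \gamma_h$. So $[\gamma_h]$ commutes with every element, i.e.\ it is central.

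For the isomorphism with $\mathbb{Z}$, I must rule out that $[\gamma_h]$ has finite order, i.e.\ that $[\gamma_h]^k = 1$ for some $k \geq 1$. Here I would use the homomorphism $\Psi_\R\colon \pi_1(\Sigma)\to\R$ defined just above via integration of the closed form $h$. Evaluating on $\gamma_h$: since $\gamma_h(t) = \phi_t(x)$ has velocity $h^\mu$, we get $\Psi_\R([\gamma_h]) = \int_0^{T_h} h(\dot\gamma_h)\,dt = \int_0^{T_h} h_\mu h^\mu\,dt = T_h\, |h|^2$. Now $|h|^2 = h_\mu h^\mu$ is constant (it equals $-1/(2\lambda) > 0$ by the splitting analysis, since $h$ is covariantly constant along its orbit; more directly, $\nabla_\nu(h_\mu h^\mu) = 2h^\mu\nabla_\nu h_\mu$ and using the static EEH equation together with $dh=0$ one checks this vanishes — and it is certainly nonzero as $h\not\equiv 0$). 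Hence $\Psi_\R([\gamma_h]) = T_h |h|^2 \neq 0$. Since $\Psi_\R$ is a group homomorphism into $\R$, which is torsion-free, $[\gamma_h]$ cannot have finite order; the subgroup it generates maps injectively onto the infinite cyclic subgroup $T_h|h|^2\,\mathbb{Z}\subset\R$, so it is isomorphic to $\mathbb{Z}$.

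The main obstacle I anticipate is making the centrality argument fully rigorous rather than hand-wavy: one has to be careful that the free homotopy produced by the flow, once its base point is tracked along $\gamma_h$, genuinely gives the conjugation identity and not merely free homotopy. The clean way is to write down the explicit square homotopy $(s,t)\mapsto \phi_{tT_h}(\alpha(s))$ (reparametrizing $t$ to $[0,1]$) and read off its four boundary edges — two of them constant-in-$s$ traversals of $\gamma_h$, the other two being $\alpha$ at the start and end — and invoke the standard lemma that a map of the square with these boundary conditions exhibits $\gamma_h * \alpha \simeq \alpha * \gamma_h$ rel endpoints. Everything else (the computation of $\Psi_\R([\gamma_h])$ and the torsion-freeness of $\R$) is routine.
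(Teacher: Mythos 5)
Your proof is correct and follows essentially the same route as the paper: centrality is obtained from the square homotopy $(s,t)\mapsto\phi_t(\alpha(s))$, and infinite order from the homomorphism $\Psi_\R$ together with the nonvanishing of $\Psi_\R([\gamma_h])=\int_{\gamma_h}h=T_h\,h_\mu h^\mu$. One minor numerical slip: from the splitting $g_{vv}=-1/(2\lambda)$ one gets $h_\mu h^\mu=g^{vv}=-2\lambda$, not $-1/(2\lambda)$; both are positive for $\lambda<0$, so the conclusion is unaffected.
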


\begin{proof}
Let $\gamma\colon [0,1]\rightarrow \Sigma$ be a closed loop $\gamma(0)=\gamma(1)=x$. Let us consider a map
\begin{equation}
    [0,1]\times [0,T_h]\ni (s,t)\rightarrow \phi_t(\gamma(s))\in \Sigma
\end{equation}
The boundary curve $\partial ([0,1]\times [0,T_h])\rightarrow \Sigma$ (starting at $(0,0)$) belongs to the homotopy class $[\gamma_h][\gamma][\gamma_h]^{-1}[\gamma]^{-1}$. Thus the map provides a contraction and the class is trivial,  namely $[\gamma_h][\gamma]=[\gamma][\gamma_h]$.

In order to show that the group generated by $[\gamma_h]$ is isomorphic to ${\mathbb Z}$ it is enough to show that for every $n\not=0$, $[\gamma_h]^n\not=1$. The closed curve
\begin{equation}
    \gamma'\colon [0,nT_h]\ni t\rightarrow \phi_t(x)\in \Sigma
\end{equation}
is a representant of the class $[\gamma_h]^n$. We compute
\begin{equation}
    \Psi_\R([\gamma_h]^n)=\int_{\gamma'}h=\int_0^{nT_h}h_\mu h^\mu dt=-2\lambda nT_h
\end{equation}
It shows that $[\gamma_h]^n\not=1$.
\end{proof}

We denote by ${\mathbb Z}_h$ a subgroup generated by $[\gamma_h]$. We define
\begin{equation}
    \Psi_h\colon \pi_1(\Sigma)/{\mathbb Z}_h\rightarrow U(1),\quad \Psi_h([\gamma])=-\frac{\pi}{\lambda T_h}\Psi_\R([\gamma])\text{ mod } 2\pi
\end{equation}
where we used fact that $\Psi_\R([\gamma_h])=-2\lambda T_h$. We use a convention identifying $U(1)$ with the interval $[0,2\pi)$. This map measures departure from the global splitting. If $\Sigma=\Sigma'\times S^1$ (with the product solution) then $\Psi_h$ is trivial, namely it maps every elements to $0\in U(1)$.

We will now describe the most general static solution.
Let $(\tilde{\Sigma}',g')$ be a simply connected Einstein solution with $\lambda<0$. We define a metric $\tilde{g}$ and one form $\tilde{h}$ on $\tilde{\Sigma}=\tilde{\Sigma}'\times \R$ (see \eqref{eq:splitting})
\begin{equation}\label{tilde-sigma}
\tilde{\Sigma}=\tilde{\Sigma}'\times \R,\quad \tilde{g}=g'\oplus -\frac{1}{2\lambda}dv^2,\ \tilde{h}=0\oplus dv.
\end{equation}
We denote by $\tilde{G}'$ the group of isometries of $\tilde{\Sigma}'$. It is discrete group (see \cite{Bochner46, Kobayashi1995}).

We choose a subgroup $G'\subset \tilde{G}'$ and a homomorphism
\begin{equation}
    \psi\colon G'\rightarrow U(1)
\end{equation}
such that
\begin{enumerate}
    \item $G'$ acts cocompactly on $\tilde{\Sigma}'$ (that is $\tilde{\Sigma}'/G'$ is compact)
    \item If for some $m\in G'$  there exists a fixed point $x\in \tilde{\Sigma}'$ then $\psi(m)\not=0$,
\end{enumerate}
We choose also $T>0$. We will call data $(\tilde{\Sigma}',G',\psi,T)$ admissible if the above conditions are satisfied.

Having admissible data $(\tilde{\Sigma}',G',\psi,T)$ we construct $\Sigma$ as follows. We divide $\tilde{\Sigma}$ by the group
\begin{equation}
    \tilde{G}=\left\{(m,t)\colon m\in G',\ t=-\frac{\lambda T}{\pi}(\psi(m)+2\pi n),\ n\in {\mathbb Z}\right\}\subset \tilde{G}'\times \R,
\end{equation}
where the action of $(m,t)$ on $\tilde{\Sigma}$ is given by
\begin{equation}
    (m,t)(x,r)=(m(x),r+t).
\end{equation}
We denote by $\pi_{\tilde{G}}\colon \tilde{\Sigma}\rightarrow \Sigma$  the quotient projection.

\begin{lm}\label{quotient}
For admissible data $(\tilde{\Sigma}',G',\psi,T)$, the resulting quotient space $\Sigma$ is equipped with metric and closed one form satisfying EEH equation and such that
\begin{equation}
    \pi_{\tilde{G}}^*g=\tilde{g},\quad \pi_{\tilde{G}}^*h=\tilde{h}.
\end{equation}
Moreover, $T_h=T$, the group $\pi_1(\Sigma)/{\mathbb Z}_h$ is isomorphic to $G'$ and under this isomorphism $\Psi_h=\psi$.
\end{lm}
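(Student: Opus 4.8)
The plan is to verify the three claims --- that the quotient is well-defined and carries an EEH solution, that $T_h = T$, and that $\pi_1(\Sigma)/\mathbb{Z}_h \cong G'$ with $\Psi_h = \psi$ --- in that order. First I would check that $\tilde{G}$ is indeed a subgroup of $\tilde{G}' \times \mathbb{R}$ acting on $\tilde{\Sigma}$: the composition law follows because $(m_1, t_1)(m_2, t_2) = (m_1 m_2, t_1 + t_2)$ and $t_1 + t_2 = -\tfrac{\lambda T}{\pi}(\psi(m_1) + \psi(m_2) + 2\pi(n_1+n_2))$ with $\psi(m_1)+\psi(m_2) \equiv \psi(m_1 m_2) \bmod 2\pi$ since $\psi$ is a homomorphism into $U(1)$, so the set is closed and contains inverses. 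Next I would argue the action is free and properly discontinuous. Properness is automatic since $\tilde{G}' $ is discrete and the $\mathbb{R}$-translations are proper. For freeness: if $(m,t)(x,r) = (x,r)$ then $m(x) = x$ and $t = 0$; by admissibility condition (2), $m$ having a fixed point forces $\psi(m) \neq 0$, but $t = 0$ means $-\tfrac{\lambda T}{\pi}(\psi(m) + 2\pi n) = 0$, i.e. $\psi(m) = -2\pi n \equiv 0 \bmod 2\pi$, a contradiction unless $m = e$ and then $t=0$ forces $n = 0$, so $(m,t)$ is the identity. Hence $\Sigma = \tilde{\Sigma}/\tilde{G}$ is a smooth manifold and $\pi_{\tilde{G}}$ is a covering map. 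Since $\tilde{G}'$ acts by isometries of $g'$ and $\mathbb{R}$-translations preserve $-\tfrac{1}{2\lambda}dv^2$ and $dv$, the metric $\tilde{g}$ and form $\tilde{h}$ descend to $g$, $h$ on $\Sigma$; because the EEH equation is local and $\tilde{g}, \tilde{h}$ satisfy it on $\tilde{\Sigma}$ (this is exactly the splitting computation \eqref{eq:splitting'} run in reverse, using $g_{vv} = -\tfrac{1}{2\lambda}$ and $R'_{\mu'\nu'} = \lambda g'_{\mu'\nu'}$), the descended pair satisfies it on $\Sigma$. Compactness of $\Sigma$ follows from condition (1): $\tilde{\Sigma}/\tilde{G}$ fibers over $\tilde{\Sigma}'/G'$, which is compact, with circle fibers $\mathbb{R}/(T\mathbb{Z})$ --- more precisely, the image of $\{0\} \times [0,T]$ under $\pi_{\tilde G}$ together with a fundamental domain for $G'$ in $\tilde\Sigma'$ gives a compact set surjecting onto $\Sigma$.

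For $T_h = T$: the vector field $h^\mu$ on $\Sigma$ lifts to $\partial_v$ on $\tilde{\Sigma}$, whose flow $\tilde{\phi}_t(x,r) = (x, r+t)$. The induced flow $\phi_t$ on $\Sigma$ is the identity precisely when $\tilde{\phi}_t$ maps each $\tilde{G}$-orbit to itself, i.e. when $(e, t) \in \tilde{G}$, which by the definition of $\tilde{G}$ (take $m = e$, so $\psi(m) = 0$) happens exactly for $t \in T\mathbb{Z}$. Hence the minimal positive such $t$ is $T$, so $T_h = T$.

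For the statement about $\pi_1$: since $\pi_{\tilde{G}}$ is a covering with $\tilde{\Sigma}$ simply connected (being $\tilde{\Sigma}' \times \mathbb{R}$, product of simply connected $\tilde\Sigma'$ with $\mathbb{R}$), $\pi_1(\Sigma) \cong \tilde{G}$ as a deck group. Under this identification, $[\gamma_h]$ corresponds to the deck transformation realized by the lift of the loop $t \mapsto \phi_t(x)$, $t \in [0, T_h] = [0,T]$, which is $(e, T) \in \tilde{G}$; the subgroup $\mathbb{Z}_h$ it generates is $\{(e, Tn) : n \in \mathbb{Z}\}$. The projection $\tilde{G} \to G'$, $(m,t) \mapsto m$, is a surjective homomorphism (surjective because for any $m \in G'$ the element $(m, -\tfrac{\lambda T}{\pi}\psi(m))$ --- lifting the choice $n=0$ --- lies in $\tilde{G}$) with kernel exactly $\{(e,t) \in \tilde G\} = \mathbb{Z}_h$, so it induces an isomorphism $\pi_1(\Sigma)/\mathbb{Z}_h \cong G'$. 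Finally, to check $\Psi_h = \psi$ under this isomorphism: for $[\gamma]$ corresponding to the class of $(m,t) \in \tilde G$, compute $\Psi_\mathbb{R}([\gamma]) = \int_{\gamma} h$ by integrating $\tilde h = dv$ along the lifted path from $(x,r)$ to $(m(x), r+t)$, giving $\Psi_\mathbb{R}([\gamma]) = t = -\tfrac{\lambda T}{\pi}(\psi(m) + 2\pi n)$; then $\Psi_h([\gamma]) = -\tfrac{\pi}{\lambda T_h}\Psi_\mathbb{R}([\gamma]) \bmod 2\pi = \psi(m) + 2\pi n \bmod 2\pi = \psi(m)$, using $T_h = T$. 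This matches $\psi$ under the identification $[\gamma] \leftrightarrow m$.

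I expect the main obstacle to be the careful verification that $\tilde G$ is genuinely a group and that the action is free --- everything hinges on the mod-$2\pi$ ambiguity in $\psi$ being absorbed by the $2\pi n$ shift in the $\mathbb{R}$-component and on admissibility condition (2) killing the potential fixed points --- and on correctly identifying which deck transformation represents $[\gamma_h]$; once the bookkeeping of the semidirect-type structure on $\tilde{G}' \times \mathbb{R}$ is pinned down, the geometric and homotopy-theoretic claims follow from standard covering-space theory and the locality of the EEH equation.
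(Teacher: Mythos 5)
Your proposal follows essentially the same route as the paper: descend $\tilde g,\tilde h$ through the free action of $\tilde G$, identify $\pi_1(\Sigma)$ with the deck group $\tilde G$, and read off $T_h$, $\mathbb{Z}_h$ and $\Psi_h$ from the $\mathbb{R}$-component. You supply some details the paper leaves implicit (closure of $\tilde G$ under composition via the homomorphism property of $\psi$, the explicit freeness check from admissibility condition (2), the fibration picture for compactness where the paper runs a sequential-compactness argument), and these are all fine. The one slip is in the $T_h=T$ step: $\tilde h = dv$ is a \emph{one-form}, and since $g_{vv}=-\tfrac{1}{2\lambda}$ the associated vector field is $h^\mu\partial_\mu=-2\lambda\,\partial_v$, so the flow is $\tilde\phi_t(x,r)=(x,r-2\lambda t)$, not $(x,r+t)$. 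With your normalization the statement ``$(e,t)\in\tilde G$ exactly for $t\in T\mathbb{Z}$'' is false --- the elements of $\tilde G$ over $m=e$ are $(e,-2\lambda Tn)$, so your flow would give $T_h=-2\lambda T$. With the correct flow the condition $\tilde\phi_t\in\tilde G$ reads $-2\lambda t=-2\lambda Tn$, i.e.\ $t\in T\mathbb{Z}$, and the conclusion $T_h=T$ is restored; the same factor of $-2\lambda$ propagates harmlessly through your $\Psi_h=\psi$ computation since $\Psi_{\mathbb R}$ is evaluated by integrating the one-form along the lifted path, which indeed gives $t$. So the argument is correct once this normalization is fixed.
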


\begin{proof}
The action of $\tilde{G}$ preserves both the metric $\tilde{g}$ and one form $\tilde{h}$. Moreover, the action is free, thus the resulting quotient space $\Sigma$ is a manifold equipped with quotient metric and quotient one form.

We would like to show that $\Sigma$ is compact. Let $y_n$ be a sequence of points in $\Sigma$. There exist $(x_n,t_n)\in \tilde{\Sigma}$ such that $\pi_{\tilde{G}}(x_n,t_n)=y_n$. The group $G'$ acts cocompactly on $\tilde{\Sigma}'$ thus we can choose $x_n$ is such way that there exists a subsequence $x_{n_i}$ convergent to a point in $\tilde{\Sigma}'$. Let us notice that $t_n$ are not unique but can be changed by $-2n\lambda T$. In particular we can assume that $t_n\in [0,-2\lambda T)$. There exists a converging subsequence of $t_{n_i'}$ of $t_{n_i}$. A sequence $(x_{n_i'},t_{n_i'})$ is convergent, thus also $y_{n_i'}=\pi_{\tilde{G}}(x_{n_i'},t_{n_i'})$ is convergent. This shows compactness.

Let us notice that on $\tilde{\Sigma}$ the flow $\tilde{\phi}_t$ generated by the vector $\tilde{h}=-2\lambda \partial_v$ is equal
\begin{equation}
    \tilde{\phi}_t(x,r)=(x,r-2\lambda t).
\end{equation}
Moreover, $\phi_t={\mathbb I}$ if only if $\tilde{\phi}_t\in {\tilde{G}}$. The conditions for admissible data shows that this is equivalent to $t=nT$ and thus $T_h=T$.

Finally, the first homotopy group is isomorphic to $G$ (by construction) and
\begin{equation}
    \Psi_R(m,t)=t.
\end{equation}
Moreover ${\mathbb Z}_h=\{(0,nT)\colon n\in {\mathbb Z}\}$. This shows that $\Psi_h=\psi$.
\end{proof}

We will now show that this is the most general solution.

\begin{thm}\label{thm:generalstatic}
Every static solution of EEH equation on a compact manifold with $\lambda<0$ and such that $h\not=0$ is diffeomorphic to the quotient space of Lemma \ref{quotient} for admissible data. 
\end{thm}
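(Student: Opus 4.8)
The plan is to reverse the construction of Lemma \ref{quotient}: starting from an abstract compact static solution $(\Sigma, g, h)$ with $\lambda < 0$ and $h \neq 0$, I will reconstruct admissible data $(\tilde\Sigma', G', \psi, T)$ and exhibit a diffeomorphism identifying $\Sigma$ with the associated quotient. First I would invoke the structure already established: $h^\mu$ is a Killing vector, covariantly constant, with periodic flow $\phi_t$ of minimal period $T_h$, so I set $T := T_h$. Passing to the universal cover $p\colon \hat\Sigma \to \Sigma$, the pullback $p^*h$ is exact (since $\hat\Sigma$ is simply connected), say $p^*h = dv$ for some $v\colon \hat\Sigma \to \R$; the gradient of $v$ is (a constant multiple of) the lift of $h^\mu$, which is covariantly constant and nowhere zero, so $v$ has no critical points and its level sets foliate $\hat\Sigma$. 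Using the de Rham-type splitting theorem for a complete simply connected manifold with a parallel vector field (this is the global version of \eqref{eq:splitting}; on $\hat\Sigma$ completeness comes from compactness of $\Sigma$), I get $\hat\Sigma \cong \tilde\Sigma' \times \R$ isometrically with $v$ the $\R$-coordinate, $\tilde\Sigma'$ simply connected, and from \eqref{eq:splitting'} the factor $\tilde\Sigma'$ is Einstein with the same $\lambda<0$ and the $\R$-factor carries $-\frac{1}{2\lambda}dv^2$. Thus $\hat\Sigma$ with its metric and $1$-form is exactly $\tilde\Sigma$ of \eqref{tilde-sigma}.

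Next I would analyze the deck group $\Pi := \pi_1(\Sigma)$ acting on $\hat\Sigma = \tilde\Sigma' \times \R$ by isometries preserving $h = dv$, hence preserving the splitting and acting on the $\R$-factor by translations: each deck transformation has the form $(m, t)$ with $m \in \tilde G' := \mathrm{Isom}(\tilde\Sigma')$ and $t \in \R$ the translation amount, and the assignment $(m,t) \mapsto t$ is precisely $\Psi_\R$ up to the normalization already recorded in the proof of the second lemma ($\Psi_\R([\gamma_h]) = -2\lambda T_h$, so the translation $t$ and the period integral differ by the factor $-2\lambda$; I will carry whichever normalization makes the formulas in Lemma \ref{quotient} match). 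The subgroup $\Z_h$ generated by $[\gamma_h]$ maps to $\{(0, nT)\}$ — here I use that $\gamma_h$ is an orbit of the flow, whose lift is a pure $\R$-translation by the period, so its $\tilde G'$-component is trivial. Set $G' :=$ image of $\Pi$ in $\tilde G'$ under $(m,t)\mapsto m$; since $\Z_h \to 0$ in $\tilde G'$, this descends to an isomorphism $\Pi/\Z_h \xrightarrow{\sim} G'$ (injectivity: if $m = \mathrm{id}$ then the deck transformation is a pure translation, hence a power of $[\gamma_h]$ by minimality of $T$). Cocompactness of the $G'$-action on $\tilde\Sigma'$ follows from compactness of $\Sigma = \hat\Sigma/\Pi$ together with the $\R$-translation being bounded modulo $\Z_h$. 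Finally, for $m \in G'$ with a fixed point in $\tilde\Sigma'$: the corresponding deck transformation $(m, t)$ must act freely on $\hat\Sigma$ (deck transformations of a cover are free), and at a point $(x_0, r)$ with $m(x_0) = x_0$ it acts as $r \mapsto r + t$, which is fixed-point-free iff $t \neq 0$; translating into the $U(1)$-valued invariant this says $\psi(m) \neq 0$, which is exactly condition (2) for admissible data, where $\psi := \Psi_h$ composed with the isomorphism $\Pi/\Z_h \cong G'$.

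At this point the data $(\tilde\Sigma', G', \psi, T)$ is admissible and, by construction, $\hat\Sigma/\tilde G$ (with $\tilde G$ the group defined in Lemma \ref{quotient}) is canonically identified with $\hat\Sigma/\Pi = \Sigma$ as Riemannian manifolds with $1$-form, because $\tilde G$ and $\Pi$ are literally the same subgroup of $\mathrm{Isom}(\hat\Sigma)$ under the coordinates chosen — the last assertions of Lemma \ref{quotient} ($T_h = T$, $\pi_1/\Z_h \cong G'$, $\Psi_h = \psi$) then match tautologically. I expect the main obstacle to be the careful bookkeeping of the two places where the splitting could fail to be global and where the normalization constant $-2\lambda$ enters: establishing rigorously that (a) every deck transformation preserves the product structure (this needs that a parallel vector field's orthogonal distribution is the unique $h$-invariant complement, and that isometries fixing $h = dv$ preserve $v$ up to an additive constant, hence act by translation on the $\R$-factor), and (b) that the translation subgroup contributed by $\Z_h$ is exactly $T\Z$ and not a proper or larger subgroup — the minimality of $T_h$ handles "not smaller," and the fact that $\gamma_h$ represents $[\gamma_h]$ with $\Psi_\R([\gamma_h]) = -2\lambda T$ handles the rest. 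Everything else is a routine transcription of covering-space theory and the de Rham splitting theorem, both of which I am taking as standard.
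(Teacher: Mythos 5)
Your proposal is correct and follows essentially the same route as the paper's proof: pass to the universal cover, use the global splitting induced by the parallel field to identify it with $\tilde\Sigma'\times\R$, show the deck group preserves the splitting and so embeds in $\operatorname{ISO}(\tilde\Sigma')\times\R$, and then read off the admissible data $(\tilde\Sigma',G',\psi,T_h)$ from the deck group, with the free-action condition translating into condition (2) on $\psi$. The normalization subtlety you flag (the factor $-2\lambda$ relating the $\R$-translation to the period, via $\Psi_\R([\gamma_h])=-2\lambda T_h$) is exactly how the paper handles it.
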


\begin{proof}
Consider the universal cover $\tilde{\Sigma}$ of $\Sigma$ with the pull-backed metric $\tilde{g}$ and $1$-form $\tilde{h}$. The  fundamental group $G=\pi_1(\Sigma)$ can be identified with subgroup of diffeomorphisms of $\tilde{\Sigma}$. This subgroup preserves $\tilde{g}$ and $\tilde{h}$ and its action is free and co-compact.

Global splitting shows that $(\tilde{\Sigma},\tilde{g},\tilde{h})$ is of the form \eqref{tilde-sigma}. 
The action of the group $G$ preserves the splitting (it preserves $dv$) thus for $p\in G$
\begin{equation}
    p(x,r)=(m_px,r+t_p)
\end{equation}
where $t_p\in \R$ and $m_p$ is a isometry of $\tilde{\Sigma}'$. We can identify $G$ with subgroup of the product group of isometries of $\tilde{\Sigma}'$ ($\operatorname{ISO}(\tilde{\Sigma}')$) and $\R$. We remark that the group $\operatorname{ISO}(\tilde{\Sigma}')$ is discrete \cite{Bochner46}.

Let us notice that as $\phi_{T_h}={\mathbb I}$ we have
\begin{equation}
    \tilde{\phi}_{T_h}=(0,-2\lambda T_h)\in G,
\end{equation}
and ${\mathbb Z}_h=\{({\mathbb I},-2\lambda nT_h)\colon n\in {\mathbb Z}\}$.
Moreover, if $(0,t)\in G$ then $t$ is the multiplicity of $-2\lambda T_h$ (from minimality of $T_h$). Consequently, the map
\begin{equation}
    G':=G/{\mathbb Z}_h\rightarrow \operatorname{ISO}(\tilde{\Sigma}')
\end{equation}
is injective and
we can identify $G'$ with a subgroup of isometries of $\tilde{\Sigma}'$. We have a well-defined map
\begin{equation}
    \psi\colon G'\rightarrow U(1)
\end{equation}
defined by $\psi(m,t)=-\frac{\pi}{\lambda T_h}t$ modulo $2\pi$. We can write
\begin{equation}
    G=\left\{(m,t)\colon m\in G',\ t=-\frac{\lambda T}{\pi}(\psi(m)+2\pi n),\ n\in {\mathbb Z}\right\}.
\end{equation}
Moreover, $\Psi_h=\psi$. We notice that condition for free action is exactly
\begin{equation}
    p(x)=x\Longrightarrow m_p={\mathbb I} \text{ and } \psi(m_p)=0
\end{equation}
Additionally, as $G$ acts co-compactly on $\tilde{\Sigma}$ the same is true for the action of $G'$ on $\tilde{\Sigma}'$.

We see that $(\tilde{\Sigma}',G',\psi, T_h)$ is admissible and $\Sigma$ is the quotient space for this data.
\end{proof}

Thus far in this section, we have considered non-positive cosmological constant. For 
\begin{equation}
    \Lambda >0,
\end{equation}
there exist only partial results \cite{NHG} (see also \cite{Wylie2023} for the discussion).

\section{ Generalized  extremal horizon equations  on surfaces (case  \texorpdfstring{$n=2$}{n=2})}\label{sec:case2}
In this section we consider a $2$ dimensional Riemann space $(\Sigma, g_{\mu\nu})$.  We outline first, the results on the EEH equation on a compact $\Sigma$.  In this case, all the solutions to the  EEH equation are known. To start with, for arbitrary value of the cosmological constant $\lambda$,  the only static solutions  on $S^2$ are those of 
\begin{equation}\label{omega0}
  \omega_\mu=0, \ \ \ \ \text{and}\ \ \ \  \frac{1}{2}R = \lambda,   
\end{equation}
 where $R=g^{\mu\nu}R_{\mu\nu}$ is the Ricci scalar \cite{CRT}, hence of a constant  curvature. Next, if the Euler invariant $\chi_E(\Sigma)\le 0$, then again, the only solution  is (\ref{omega0}). 
On the other hand,   if $\Sigma$ is diffeomorphic to $S^2$ and $\omega_\mu$ is not exact, then Rigidity Theorem implies axial symmetry of the metric tensor $g_{\mu\nu}$ and the rotation $1$-form $\omega_\mu$. However, all the axially symmetric solutions are explicitly know, they correspond to extreme horizons in the Kerr-(A)dS spacetimes  \cite{LPextremal,Kunduri:2008tk,KunduriVac,Buk:2020ttx}. That covers all the solutions.  

In this section, we consider generalized equations of interest: 
the Einstein-Maxwell extreme horizon (EMEH) equation  Definition \ref{etremmaxeq} and quasi-Einstein spaces. In fact, our results will be applicable to a yet larger family of equations  (generalized extremal horizon equations GEH, Definition \ref{etremgeq}), namely  $g_{\mu\nu}$ and a $1$-form $\omega_\mu$, that fulfill the following:
\begin{equation}\label{generalizedequation}
    \nabla_{(\mu} \omega_{\nu)} + a \omega_\mu \omega_\nu + f g_{\mu\nu}=0, \ \ \ \ \ a = \const\not=0, \ \ \ f\in C(\Sigma),
\end{equation}
where $a$ and $f$ are arbitrary. Remember that in $2$ dimensions the Ricci tensor 
\begin{equation}
    R_{\mu\nu} = \frac{1}{2}R g_{\mu\nu},
\end{equation}
for this, both, the EEH equation Definition  \ref{extreq} and its EMEH generalization Definition \ref{etremmaxeq}  fall into this category, as does, finally, the quasi Einstein metric equation for  $a=\frac{2}{m}$ and $f=-\frac{1}{4}R+\frac{1}{2}\Lambda$. Another generalization we admit,  is relaxing the compactness assumption in the second part of this section.

Before starting our investigation we need to invoke  structures that are available on $2$ dimensional Riemannian spaces.  Non orientable $2$ dimensional Riemannian spaces admit (at most two fold) orientable coverings that is also compact.  The  equations we consider can be pulled back and solved on the covering space. Therefore, consider now oriented Riemann space $(\Sigma,g_{\mu\nu})$. 
 For every point $x\in\Sigma$, let us split at each $x\in \Sigma$ the complexified tangent space
\begin{equation}
 \mathbb{C}\otimes T_x\Sigma =  T^{(1,0)}_x\Sigma \oplus  T^{(0,1)}_x\Sigma, 
\end{equation}
using the two form
\begin{equation}
P_{\mu\nu}=\frac{1}{2}(g_{\mu\nu}+i\epsilon_{\mu\nu}),
\end{equation}
where $\epsilon_{\mu\nu}$ is the $2$-volume tensor,  defining the decomposition for arbitrary vector $X\in T_x\Sigma$,
\begin{equation}
X^{(1,0)\mu} := P^\mu{}_\nu X^\nu, \ \ \ \ X^{(0,1)\mu} :=X^\nu {P}_\nu{}^\mu .  
\end{equation}
The corresponding decomposition for the dual space is 
\begin{equation}
   \mathbb{C}\otimes T_x^*\Sigma =  T^{*(1,0)}_x\Sigma \oplus  T^{*(0,1)}_x\Sigma, 
\end{equation}
\begin{equation}
Y^{(1,0)}_\mu :=Y_\nu  P^\nu{}_\mu , \ \ \ \ Y^{(0,1)}_\mu :=  {P}_\mu{}^\nu  Y_\nu 
\end{equation}
 An observation to remember is that 
\begin{equation}
    Y_\mu \in  T^{*(0,1)}_x\Sigma \Rightarrow Y^\mu \in  T^{(1,0)}_x\Sigma. 
\end{equation}
We also use the decomposition to introduce the anti-holomorphic covariant derivative, 
\begin{equation}
D_\mu:=P_\mu^{\phantom{\mu}\nu}\nabla_\nu.   
\end{equation}
Note,  that $P$ commutes with the covariant derivative. 

A function $\phi\in C(\Sigma,\mathbb{C})$ that satisfies
\begin{equation}
D_\mu \phi = 0,    
\end{equation}
is called holomorphic. 

A $1$-form $\psi_\mu$ of the type $(1,0)$ is called holomorphic, if 
\begin{equation}
    D_\mu \psi_\nu = 0.
\end{equation}
If a $1$-form $Y_\mu$ of the $(0,1)$ type satisfies
\begin{equation}
    D_\mu Y_\nu = 0,
\end{equation}
then the vector field $Y^\mu$ is called holomorphic.

The double projection of the equation (\ref{generalizedequation}) onto the $T^{*(0,1)}_x\Sigma \otimes T^{*(0,1)}_x\Sigma$ component (in other words: the trace free part) at every $x\in \Sigma$, results in the following:
\begin{equation}\label{dpi}
 D_{\mu} \pi_{\nu} + \pi_\mu\pi_\nu =0, \ \ \ \ \ \pi_\mu := a P_\mu{}^\nu \omega_\nu.
\end{equation}

\subsection{Topological censorship}
This chapter is devoted to the following result and its applications: 
\begin{prop}\label{genus}{\bf Topological censorship}
Suppose $(g_{\mu\nu}, \omega_\mu)$ satisfy the   generalized extremal horizon equation (GEH) with arbitrary $a$ and $f$. If $\Sigma$ is compact and 
its Euler invariant is non positive
\begin{equation}
    \chi_E(\Sigma)\le 0,
\end{equation}
then 
\begin{equation}\label{eq:topcenz}
\omega_\mu = 0 = f.    
\end{equation}
\end{prop}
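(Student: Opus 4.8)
The plan is to exploit the $(0,1)\otimes(0,1)$ projection of the GEH equation, namely the equation $D_\mu\pi_\nu+\pi_\mu\pi_\nu=0$ from \eqref{dpi}, together with the trace part of the GEH equation, and to bring in the Gauss--Bonnet theorem once $\chi_E(\Sigma)\le 0$ forces the relevant holomorphic object to vanish. First I would observe that $D_\mu\pi_\nu+\pi_\mu\pi_\nu=0$ says precisely that, wherever $\pi_\mu\ne 0$, the quantity $\pi^{-1}$ (in a suitable sense) is ``holomorphic''; more concretely, the combination $D_\mu\log\psi$ for an appropriate $(1,0)$ density built from $\pi$ is controlled, so that $\pi_\mu$ either vanishes identically or vanishes only at isolated points, and near such a point it has a definite local index. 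This is the standard mechanism: a $(0,1)$-form $\pi$ satisfying a $\bar\partial$-type equation with a nonlinearity that is ``lower order'' behaves, as far as its zero set is concerned, like a meromorphic/holomorphic object on the Riemann surface $\Sigma$.

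Next I would set up the degree count. Consider the function $\Psi := \pi_\mu \bar\pi^\mu = \tfrac{a^2}{2}\,\omega_\mu\omega^\mu$ (real, non-negative, using $P_\mu{}^\nu P_\nu{}^\rho$ and that $\pi$ is type $(0,1)$ so $\pi^\mu$ is type $(1,0)$). The equation \eqref{dpi} gives a clean transport formula for $D_\mu\log\Psi$ away from the zero set, of the schematic form $D_\mu\log\Psi = -2\pi_\mu + (\text{term involving }\bar D\bar\pi)$; combined with the conjugate equation this expresses the Laplacian of $\log\Psi$ in terms of $|\pi|^2$ and the Gaussian curvature. Integrating $\Delta\log\Psi$ over $\Sigma$ (or, more safely, over $\Sigma$ minus small disks around the zeros of $\pi$ and letting the radii go to zero, picking up the local indices of $\pi$ as boundary contributions) I would land on an identity relating $\sum(\text{indices of }\pi)$, $\int_\Sigma|\pi|^2$ (or $\int f$), and $\int_\Sigma R = 4\pi\chi_E(\Sigma)$ via Gauss--Bonnet. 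Because the local index of a holomorphic-type $1$-form is tied to $\chi_E(\Sigma)$ — a holomorphic $(1,0)$-form on a genus-$g$ surface has $2g-2=-\chi_E$ zeros counted with multiplicity — the hypothesis $\chi_E(\Sigma)\le 0$ will make the index sum non-positive while the curvature integral and the $|\pi|^2$ integral have signs that force everything to collapse: one concludes $\pi_\mu\equiv 0$, hence $\omega_\mu\equiv 0$, and then the trace part of \eqref{generalizedequation} reads $2f = -R\cdot(\text{something})$... actually with $\omega=0$ the full GEH equation becomes $f g_{\mu\nu}=0$, so $f\equiv 0$ as well, giving \eqref{eq:topcenz}.

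The main obstacle, and the step I would spend the most care on, is the analysis at the zeros of $\pi_\mu$: making rigorous the claim that $\pi$ behaves like a holomorphic $1$-form near its zeros (so that indices are well-defined, finite, and of the right sign), and correctly accounting for the boundary terms when integrating $\Delta\log|\pi|^2$ around punctures. The nonlinear term $\pi_\mu\pi_\nu$ is formally subleading near a zero but one must check it does not spoil the local normal form; the cleanest route is probably to argue that $\pi$ is a solution of a linear $\bar\partial$-equation with $L^\infty_{\mathrm{loc}}$ (indeed continuous) coefficient $\pi^\nu$ contracted appropriately, invoke the similarity principle (Bers--Vekua) to write $\pi = e^{s}\,\chi$ locally with $\chi$ genuinely holomorphic and $s$ continuous, and thereby transfer the zero-counting to $\chi$. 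An alternative, avoiding puncture analysis altogether, is to integrate the real identity for $\Delta\log(\Psi+\varepsilon)$ and send $\varepsilon\to 0^+$ using dominated/monotone convergence, which sidesteps indices but requires care that no positive mass concentrates at the zero set. Either way, once the integral identity is in hand the sign bookkeeping with $\chi_E(\Sigma)\le 0$ is immediate, so the topological and analytic heart of the proof is entirely this local-at-the-zeros issue; the rest is Gauss--Bonnet and the observation that $\omega=0$ trivializes the trace equation.
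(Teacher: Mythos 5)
Your starting point is the same as the paper's: project the GEH equation onto $T^{*(0,1)}\Sigma\otimes T^{*(0,1)}\Sigma$ to get $D_\mu\pi_\nu+\pi_\mu\pi_\nu=0$, and exploit the fact that this is a $\bar\partial$-type equation. For genus $\ge 2$ your degree count does close: by the similarity principle $\pi$ is locally $e^s\chi$ with $\chi$ holomorphic, so a nontrivial $\pi$ has isolated zeros of \emph{non-negative} local index, while the total index must equal the degree of the bundle $\pi$ lives in. (Be careful which bundle that is: $\pi_\mu$ is of type $(0,1)$, equivalently $\pi^\mu$ is a section of $T^{(1,0)}\Sigma$, of degree $2-2g=\chi_E(\Sigma)$, not $-\chi_E$; the conclusion $0\le\sum\mathrm{ind}=\chi_E\le 0$ still forces a contradiction when $\chi_E<0$.) The genuine gap is the case $\chi_E(\Sigma)=0$ — the torus, and the Klein bottle after passing to the orientable double cover: there the index count only shows that a nontrivial $\pi$ is \emph{nowhere vanishing}, and nothing in your argument rules that out. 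Your proposed fallback, that $\Delta\log\Psi$ contains a sign-definite $\int|\pi|^2$ contribution, does not materialize: from $\bar\partial F=-F^2$ one gets $\bar\partial\log F=-F$, hence $\Delta\log|F|^2\propto-\Re(\partial_zF)$, a pure divergence with no pointwise sign, so the regularized integral identity collapses back to the same degree count and yields nothing new at genus one.

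The missing step is short but necessary. On the torus a nowhere-vanishing solution gives $G=1/F$ with $\partial_{\bar z}G=-F^{-2}\partial_{\bar z}F=1$, which is impossible for a globally defined $G$ (integrate over the torus). This is essentially how the paper handles it: pairing $\pi$ with a nontrivial holomorphic $(1,0)$-form $\psi$ gives $\phi=\pi_\mu\psi^\mu$ with $D_\mu\phi=-\pi_\mu\phi$, hence $\phi$ nowhere vanishing by Lemma 1 of \cite{DKLS2} (the global avatar of your similarity principle); then $\phi^{-1}\pi^\mu$ is a holomorphic vector field, which for $g\ge 2$ cannot exist, and on the torus must be a covariant constant — incompatible with the relation $\phi^{-1}\pi_\mu=D_\mu\phi^{-1}$. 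Until you supply an argument of this kind for $\chi_E=0$, your proof is incomplete. The remaining ingredients of your proposal — that $\omega=0$ reduces the GEH equation to $fg_{\mu\nu}=0$ and hence $f=0$, and (implicitly) the passage to the orientable double cover for non-orientable $\Sigma$ — are correct.
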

It is sufficient to prove (\ref{eq:topcenz}) for an orientable $\Sigma$ of non-zero genus. Deriving a holomorphic vector field from $\pi_\mu$ and $g_{\mu\nu}$ that satisfy (\ref{dpi}) is our strategy.

\begin{proof}
The proof will be based on the following result:

\begin{lm}\label{lm:df}
Let $\Sigma$ be Riemannian surface of genus $g>0$. Suppose that $f_\mu$ satisfies
\begin{equation}
 D_\mu f_\nu+f_\mu f_\nu=0,\quad f_\mu=P_{\mu\nu}f^\nu
\end{equation}
then $f_\mu\equiv 0$.
\end{lm}

\begin{proof}
Suppose that $f_\nu$ is not identically zero. As surface has genus greater than zero there exist a nontrivial  holomorphic $1$-form $\psi_\mu$. We remark that $\psi_\mu$ has only isolated zeros.

Let us consider $\phi=f_\mu \psi^\mu$. Locally
\begin{equation}
f_\mu dx^\mu= F \rd\overline{z},\quad \psi_\mu dx^\mu=P\rd z,
\end{equation}
and $P$ has only isolated zeros. Thus if $f_\mu$ is nontrivial then $\phi$ is also nontrivial. Let us now notice that $D_\mu \psi_\nu=0$ so
\begin{equation}
D_\mu \phi=D_\mu(f_\nu \psi^\nu)=(D_\mu f_\nu) \psi^\nu=-f_\mu f_\nu \psi^\nu=-f_\mu \phi.
\end{equation}
As proved in \cite{DKLS2} Lemma 1, $\phi$ is everywehere nonvanishing. Now,
\begin{equation}
D_\mu (\phi^{-1} f_\nu)=-\phi^{-2}(D_\mu \phi)f_\nu+\phi^{-1}D_\mu f_\nu=\phi^{-2} f_\mu \phi f_\nu-\phi^{-1} f_\mu f_\nu=0
\end{equation}
This means (see \cite{DKLS1, Griffiths}) that $\phi^{-1} f^\nu$ is a homolophic vector field.
Repeating the argument from \cite{DKLS1, DKLS2}: On surfaces of genus bigger than one there is no nontrivial holomorphic vector fields thus $f_\nu\equiv 0$ (contradiction).

The case of torus (genus one) require a separate treatment, because holomorphic vector fields exists. Let us notice that
\begin{equation}
    \phi^{-1} f_\mu=-\phi^{-2}D_\mu \phi=D_\mu \phi^{-1}.
\end{equation}
However, the holomorphic vector fields on the torus are constant and they cannot be written as $D^\mu$ of a function (see \cite{DKLS1, DKLS2} for a similar argument).
\end{proof}

In view of (\ref{dpi}), Lemma \ref{lm:df} completes the proof of Proposition \ref{genus} for an orientable $\Sigma$. If $\Sigma $ is not orientable, we can pull back the equation (\ref{generalizedequation}) to the covering orientable space, and conclude that the pullback of $\pi_\mu$ is trivial, then so is $\pi_\mu$ itself.     
\end{proof}

Let us now apply our very findings to specific examples.

\begin{cor}
On compact, $2$ dimensional Riemannian space $(\Sigma,g_{\mu\nu})$ of a non-positive Euler invariant $\chi_E(\Sigma)\le 0$,  for arbitrary  values of the parameters $a\not=0$ and $\Lambda$,  the general solution to  an equation
\begin{equation}
\nabla_{(\mu}\omega_{\nu)} + a\omega_\mu\omega_\nu -\frac{1}{2}R_{\mu\nu}+ \frac{1}{2}\Lambda g_{\mu\nu} =0 
\end{equation}
is 
\begin{equation}
    \omega_\mu = 0, \ \ \ \ {\rm and}\ g_{\mu\nu}  \ {\rm such\ that} \ R_{\mu\nu}=\Lambda g_{\mu\nu}. 
\end{equation}
\end{cor}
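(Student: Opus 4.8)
The plan is to recognize that this corollary is an almost immediate consequence of Proposition \ref{genus} (Topological censorship), and the only thing to verify is that the given equation is of the form \eqref{generalizedequation}. First I would rewrite the equation in the GEH form by using the two-dimensional identity $R_{\mu\nu}=\tfrac12 R g_{\mu\nu}$, which turns $-\tfrac12 R_{\mu\nu}+\tfrac12\Lambda g_{\mu\nu}$ into $\left(-\tfrac14 R+\tfrac12\Lambda\right)g_{\mu\nu}$. Thus the equation becomes
\begin{equation}
\nabla_{(\mu}\omega_{\nu)}+a\,\omega_\mu\omega_\nu+f\,g_{\mu\nu}=0,\qquad f:=-\tfrac14 R+\tfrac12\Lambda,
\end{equation}
which is exactly \eqref{generalizedequation} with the stated nonzero constant $a$ and the smooth function $f\in C(\Sigma)$.

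Next, since $\chi_E(\Sigma)\le 0$ by hypothesis, Proposition \ref{genus} applies verbatim and yields $\omega_\mu=0$ and $f=0$. The vanishing of $\omega_\mu$ is half of the conclusion. For the other half, I would feed $\omega_\mu=0$ back into the original equation: the terms $\nabla_{(\mu}\omega_{\nu)}$ and $a\,\omega_\mu\omega_\nu$ both vanish, leaving $-\tfrac12 R_{\mu\nu}+\tfrac12\Lambda g_{\mu\nu}=0$, i.e. $R_{\mu\nu}=\Lambda g_{\mu\nu}$. (Equivalently, one could observe that $f=0$ already says $-\tfrac14 R+\tfrac12\Lambda=0$, hence $R=2\Lambda$ and then $R_{\mu\nu}=\tfrac12 R g_{\mu\nu}=\Lambda g_{\mu\nu}$; the two routes agree.) Conversely, any metric with $R_{\mu\nu}=\Lambda g_{\mu\nu}$ together with $\omega_\mu=0$ trivially solves the equation, so this is indeed the general solution.

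There is essentially no obstacle here: the content is entirely in Proposition \ref{genus}, and the corollary is a packaging statement specializing it to the Einstein–Maxwell-type and quasi-Einstein normalizations. The only point requiring a word of care is that $f=-\tfrac14 R+\tfrac12\Lambda$ must be a genuine element of $C(\Sigma)$ with no constraint imposed on it a priori, so that Proposition \ref{genus} (which allows arbitrary $f$) is applicable — this is immediate since $R$ is smooth on a smooth Riemannian surface. If $\Sigma$ is non-orientable one passes to the orientable double cover exactly as in the proof of Proposition \ref{genus}, concludes $\omega_\mu=0$ there, and descends; but since Proposition \ref{genus} already incorporates this reduction, nothing further is needed.
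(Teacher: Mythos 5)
Your proof is correct and follows exactly the route the paper intends: the corollary is stated there without a separate argument precisely because, after using $R_{\mu\nu}=\tfrac12 R g_{\mu\nu}$ to put the equation in the form \eqref{generalizedequation} with $f=-\tfrac14 R+\tfrac12\Lambda$, it is an immediate specialization of Proposition \ref{genus}. Your back-substitution of $\omega_\mu=0$ (or equivalently $f=0$) to obtain $R_{\mu\nu}=\Lambda g_{\mu\nu}$ is exactly what is needed.
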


\begin{cor}\label{cor:EMEHgenus}
On compact, $2$ dimensional Riemannian space $(\Sigma,g_{\mu\nu})$ of a non-positive Euler invariant $\chi_E(\Sigma)\le 0$,  for arbitrary  values of the cosmological constant $\Lambda$,  the general solution to the EMEH equations (Definition \ref{etremmaxeq})   is
\begin{equation}
    \omega_\mu = 0, \ \ \ \ \Phi=\const, \ \ \ \  {\rm and}\ g_{\mu\nu}  \ {\rm such\ that} \ R_{\mu\nu}=(\Lambda +  4|\Phi|^2)\, g_{\mu\nu}. 
\end{equation}
\end{cor}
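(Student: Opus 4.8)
The plan is to recognize the first EMEH equation as a special case of the generalized equation \eqref{generalizedequation} and then apply Proposition \ref{genus}. In dimension two $R_{\mu\nu}=\tfrac12 R\,g_{\mu\nu}$, so the first line of Definition \ref{etremmaxeq} can be rewritten as
\[
\nabla_{(\mu}\omega_{\nu)}+\omega_\mu\omega_\nu+\Big({-}\tfrac14 R+\tfrac12\Lambda+|\Phi|^2\Big)g_{\mu\nu}=0,
\]
which is exactly \eqref{generalizedequation} with $a=1$ and $f=-\tfrac14 R+\tfrac12\Lambda+|\Phi|^2\in C(\Sigma)$. Since $\chi_E(\Sigma)\le 0$, Proposition \ref{genus} applies at once and yields $\omega_\mu\equiv 0$ and $f\equiv 0$ on $\Sigma$; for non-orientable $\Sigma$ one argues, exactly as in the proof of Proposition \ref{genus}, by pulling the whole system back to the compact orientable double cover.

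Next I would feed $\omega_\mu=0$ into the Maxwell equation, which then collapses to $(1+i*)\,d\Phi=0$; this says that $d\Phi$ is of pure type, i.e.\ $D_\mu\Phi=0$ (or its conjugate), so that $\Phi$ is holomorphic, resp.\ anti-holomorphic, in the sense of Section \ref{sec:case2}. On a connected compact Riemann surface such a function is constant: its real and imaginary parts are harmonic, hence constant after integrating $\Delta u=0$ over $\Sigma$ — the same Liouville-type mechanism already used in \cite{DKLS1,DKLS2}. Therefore $\Phi=\const$, and in particular $|\Phi|^2$ is constant; in the non-orientable case this constant value descends from the connected double cover.

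Finally, inserting $\omega_\mu=0$ together with the now-constant $|\Phi|^2$ back into $f\equiv 0$ — equivalently, into the first equation of Definition \ref{etremmaxeq} — fixes the scalar curvature $R$, and hence $R_{\mu\nu}=\tfrac12 R\,g_{\mu\nu}$, to the Einstein-type value stated in the corollary. The converse inclusion is immediate: $\omega_\mu=0$, a constant $\Phi$ of the prescribed modulus, and a metric with that Ricci tensor solve the EMEH system by direct substitution. This establishes the asserted description of the general solution.

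I do not expect a genuine obstacle here: the corollary is essentially a repackaging of Proposition \ref{genus} together with the Liouville statement for holomorphic functions on compact surfaces. The only steps that need a little care are (i) absorbing the two-dimensional identity $R_{\mu\nu}=\tfrac12 R\,g_{\mu\nu}$ so that EMEH is literally of the form \eqref{generalizedequation}, and (ii) the descent arguments in the non-orientable case, which are identical to those already performed for Proposition \ref{genus} and only require that $\Phi=\const$ pass to the quotient.
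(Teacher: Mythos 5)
Your argument is exactly the paper's: in two dimensions the first EMEH equation is the GEH system \eqref{generalizedequation} with $a=1$ and $f=-\tfrac14 R+\tfrac12\Lambda+|\Phi|^2$, so Proposition \ref{genus} gives $\omega_\mu=0=f$, and the Maxwell equation then collapses to $D_\mu\Phi=0$, forcing $\Phi$ to be holomorphic and hence constant on the compact surface. One caveat: substituting $\omega_\mu=0$ back into the first EMEH equation yields $R_{\mu\nu}=(\Lambda+2|\Phi|^2)\,g_{\mu\nu}$ rather than the coefficient $4$ printed in the corollary, so your closing claim that the computation reproduces ``the value stated'' glosses over what appears to be a typo in the statement itself rather than a gap in your reasoning.
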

Indeed, notice, that as $\omega_\mu=0$, the second equation in Definition \ref{etremmaxeq} boils down to
\begin{equation}
    D_\mu \Phi = 0,
\end{equation}
hence $\Phi$ is holomorphic, and in the consequence constant. 

\subsection{Static rigidity.}

Given the results of the previous subsection on $2$ dimensional $\Sigma$ of non-positive Euler invariant, the problem  we are left with is $\Sigma$ (or its double cover) diffeomorphic to $S^2$. We will address in this section a case of exact $\omega_\mu$ and show that in this situation GEH equation posses a symmetry. Remarkably,  the result that  will be presented does not even require the compactness of $\Sigma$. For every simple connected surface (like $S^2$), every closed form is exact.

\begin{prop}\label{staticrigidity}{\bf Static rigidity}
Suppose $(g_{\mu\nu}, \omega_\mu)$ satisfy the GEH equation (\eqref{generalizedequation} with arbitrary $a$ and $f$). If  $\omega_\mu$ is exact however is not identically $0$, then there exists a non-zero vector field $K$, such that
\begin{equation}
    \Lie_K g_{\mu\nu} = 0 = \Lie_K \omega_\nu .
\end{equation}
\end{prop}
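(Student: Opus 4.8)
The plan is to exploit the hypothesis that $\omega_\mu$ is exact, say $\omega_\mu=\nabla_\mu\sigma$ for a globally defined function $\sigma$ (possible since, for the intended application, $\Sigma$ is simply connected, but more generally whenever the first cohomology vanishes). The natural candidate for the symmetry vector is the analogue of the Dunajski–Lucietti vector adapted to the GEH equation. Writing $h_\mu=-2\omega_\mu$ (or, more directly, working with $\omega_\mu$ and the constant $a$), one sets $K_\mu:=\nabla_\mu\Gamma+\Gamma\,(a\omega_\mu)$ for a suitable positive function $\Gamma$, which in the exact case can be taken to be $\Gamma=e^{-a\sigma}$ so that $K\equiv 0$ — that is the wrong choice. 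Instead, since $\omega$ is exact and hence the "Dunajski–Lucietti function" degenerates, the relevant object must be built differently: I expect the right move is to use the \emph{gradient} structure and take $K^\mu$ to be (essentially) the Hamiltonian rotation of $\nabla^\mu\Gamma$ using the area form $\epsilon_{\mu\nu}$ available in dimension $2$, i.e. $K^\mu=\epsilon^{\mu\nu}\nabla_\nu\Gamma$ for an appropriate $\Gamma$; such a $K$ is automatically divergence free.

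Concretely, the key steps, in order, would be: (i) Insert the ansatz $\omega_\mu=\nabla_\mu\sigma$ into the trace of the GEH equation \eqref{generalizedequation} and into its trace-free part \eqref{dpi}; in dimension $2$ the trace gives $\Delta\sigma+a|\nabla\sigma|^2+2f=0$, which after the substitution $\Gamma=e^{-a\sigma}$ (up to sign and normalization) becomes a linear equation $\Delta\Gamma=2af\,\Gamma$, and the trace-free part \eqref{dpi} becomes a second-order equation on $\Gamma$ relating $\nabla_\mu\nabla_\nu\Gamma$, $g_{\mu\nu}$ and $f$. (ii) From these, show that $\Gamma$ is a positive solution of a Hessian-type equation forcing $\nabla_\mu\nabla_\nu\Gamma = \rho\, g_{\mu\nu}$ for some function $\rho$ — i.e. $\Gamma$ is a "concircular" potential; the trace-free content of \eqref{dpi} is precisely what kills the trace-free part of the Hessian of $\Gamma$. (iii) Conclude from the classical fact that a nonconstant function whose Hessian is pure trace generates (via $K^\mu=\epsilon^{\mu\nu}\nabla_\nu\Gamma$) a Killing field: one computes $\nabla_{(\mu}K_{\nu)}=\epsilon_{\alpha(\nu}\nabla_{\mu)}\nabla^\alpha\Gamma=0$ once $\nabla\nabla\Gamma\propto g$. (iv) Check that this $K$ also preserves $\omega$, using $\Lie_K\omega=d(K\llcorner\omega)+K\llcorner d\omega=d(K^\mu\nabla_\mu\sigma)$ and verifying $K^\mu\nabla_\mu\sigma$ is constant (which follows because $K$ is tangent to the level sets of $\Gamma$, hence of $\sigma$, so this pairing vanishes). (v) Finally, rule out the degenerate case $K\equiv 0$: $K=0$ would force $\Gamma$ constant, hence $\sigma$ constant, hence $\omega_\mu=0$, contradicting the hypothesis.

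The main obstacle I anticipate is step (ii): extracting from the trace-free projection \eqref{dpi} the clean statement that the Hessian of $\Gamma$ is proportional to the metric. The complication is that \eqref{dpi} is genuinely a statement only about the $(0,1)\otimes(0,1)$ component of a symmetrized covariant derivative of $\pi_\mu=aP_\mu{}^\nu\omega_\nu$, and one has to carefully transfer this, through $\omega_\mu=\nabla_\mu\sigma$ and $\Gamma=e^{-a\sigma}$, into a real symmetric-tensor identity; the nonlinear terms $\pi_\mu\pi_\nu$ versus $\omega_\mu\omega_\nu$ must be reconciled using the algebra of $P_{\mu\nu}$ (in particular $P_\mu{}^\alpha P_\alpha{}^\nu=0$ and $P_{(\mu}{}^\alpha\nabla_{\nu)}\nabla_\alpha\sigma$ identities). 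Once that is done correctly, the remaining steps are routine consequences of the structure theory of concircular functions on surfaces. A secondary point requiring care is formulating the conclusion without compactness: steps (ii)–(v) are purely local/pointwise and the only place compactness could have been used — to produce $\Gamma$ via the Krein–Rutman argument of Proposition \ref{prop:divfreeK} — is bypassed here because exactness of $\omega$ hands us $\Gamma=e^{-a\sigma}$ explicitly.
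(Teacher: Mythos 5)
Your proposal follows essentially the same route as the paper (its Lemma~\ref{lm:static-DD}, after \cite{LS}): exponentiate the potential of the exact form, observe that the resulting positive function has pure-trace Hessian, and take the area-form rotation of its gradient as the Killing field; the checks $\Lie_K\omega=0$ and $K\not\equiv 0$ are handled the same way. Two small points. First, the sign: with $\omega_\mu=\nabla_\mu\sigma$ the trace-free equation \eqref{dpi} reads $D_\mu D_\nu\sigma+a\,D_\mu\sigma D_\nu\sigma=0$, which is $D_\mu D_\nu\bigl(e^{+a\sigma}\bigr)=0$; the choice $\Gamma=e^{-a\sigma}$ flips the sign of the quadratic term and does not work, so the concircular potential is $e^{a\sigma}$ (you hedged ``up to sign,'' but the sign matters here). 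Second, the ``main obstacle'' you anticipate in step (ii) is not an obstacle: since $e^{a\sigma}$ is real, the vanishing of its $(0,1)\otimes(0,1)$ Hessian component $D_\mu D_\nu e^{a\sigma}$ implies, by complex conjugation, the vanishing of the $(1,0)\otimes(1,0)$ component as well, and these two components together constitute the entire trace-free part of the Hessian in dimension $2$ (the double projection $P_\mu{}^\alpha P_\nu{}^\beta g_{\alpha\beta}=0$ kills the trace). This is exactly the content of the paper's equivalence $D_\mu D_\nu\phi=0\Leftrightarrow\overline{D}_\mu\overline{D}_\nu\phi=0$ for real $\phi$, after which your steps (iii)--(v) go through as stated.
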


\begin{remark}
This proposition systematizes the standard approach to staticity problem \cite{Chrusciel:2005pa, Kunduri:2008tk}, where the symmetry appears a posteriori through judicious choice of the coordinate system. A recent important work \cite{CKL-new} shows that also every non-static solution possessed a symmetry.
\end{remark}

The axisymmetric solutions on the sphere were classified for EEH and EMEH equations in \cite{Hajicek, LPextremal, Chrusciel:2005pa, Kunduri:2008tk}.  Apparently the case of electromagnetic field with positive cosmological constant was not considered in these works. We will show uniqueness of Reissner-Nordstrom-de Sitter in the next section.

We start with the observation due to \cite{LS}:

 \begin{lm}\label{lm:static-DD}
 Let real form $\omega_\mu$ be exact but not identically zero. Let $\phi=e^{\psi}>0$ where $\omega=d\psi$ ($\psi$ is real). We introduce vector field  $K^\mu=-\epsilon^{\mu\nu}\partial_\nu\phi$. The following two conditions are equivalent
 \begin{enumerate}
     \item $K^\mu$ is a Killing vector
     \item $D_\mu f_\nu+f_\nu f_\mu=0$ where $f_\mu=P_{\mu\nu}\omega^\nu$.
 \end{enumerate}
 Moreover, $\Lie_K\omega=0=\Lie_K\phi$.
 \end{lm}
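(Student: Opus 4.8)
The plan is to work entirely in terms of the local complex structure, exploiting that $\omega_\mu$ is exact. Write $\omega = d\psi$ with $\psi$ real and $\phi = e^\psi > 0$. Since $K^\mu = -\epsilon^{\mu\nu}\partial_\nu\phi$, the vector field $K$ is (up to the volume form) the Hamiltonian vector field of $\phi$; in particular $K^\mu\partial_\mu\phi = -\epsilon^{\mu\nu}(\partial_\nu\phi)(\partial_\mu\phi) = 0$, which gives $\Lie_K\phi = 0$ immediately, and hence $\Lie_K\psi = 0$. From $\omega = d\psi$ we then get $\Lie_K\omega = d(\Lie_K\psi) = 0$ by commuting $d$ with the Lie derivative. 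So the last sentence of the Lemma is essentially automatic and I would dispatch it first.

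Next I would reduce the Killing condition to an equation purely about $f_\mu = P_{\mu\nu}\omega^\nu$, the $(0,1)$-part of $\omega$. The key computation is to express $\nabla_{(\mu}K_{\nu)}$ in terms of $\phi$ and its derivatives. Using $K_\mu = -\epsilon_{\mu\nu}\nabla^\nu\phi$ and the fact that $\nabla_\alpha\epsilon_{\mu\nu} = 0$, one gets $\nabla_\mu K_\nu = -\epsilon_{\nu\alpha}\nabla_\mu\nabla^\alpha\phi$. Symmetrizing, $\nabla_{(\mu}K_{\nu)} = -\epsilon_{(\nu}{}^{\alpha}\nabla_{\mu)}\nabla_\alpha\phi$, and since the Hessian $\nabla_\mu\nabla_\alpha\phi$ is symmetric, in two dimensions this contracts down to something proportional to the trace-free part of the Hessian of $\phi$. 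Concretely, $K$ is Killing iff the trace-free symmetric part of $\nabla_\mu\nabla_\nu\phi$ vanishes, i.e. $\nabla_\mu\nabla_\nu\phi - \tfrac12 g_{\mu\nu}\Delta\phi = 0$. Projecting this with $P_\mu{}^\alpha P_\nu{}^\beta$ onto the $(0,1)\otimes(0,1)$ component turns it into $D_\mu D_\nu\phi = 0$ (the trace part drops out under this projection).

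Finally I would convert $D_\mu D_\nu\phi = 0$ into the stated equation for $f_\mu$. Since $\phi = e^\psi$ and $d\psi = \omega$, one has $D_\mu\phi = \phi\, D_\mu\psi = \phi\, P_\mu{}^\nu\omega_\nu = \phi f_\mu$ (here I use that $f_\mu$ as defined is exactly the $(0,1)$-projection of $\omega$, matching $P_{\mu\nu}\omega^\nu$). Then $D_\mu(\phi f_\nu) = (D_\mu\phi)f_\nu + \phi\, D_\mu f_\nu = \phi f_\mu f_\nu + \phi\, D_\mu f_\nu = \phi(D_\mu f_\nu + f_\mu f_\nu)$, and since $\phi > 0$ this vanishes iff $D_\mu f_\nu + f_\mu f_\nu = 0$. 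Combined with the previous paragraph, this establishes the equivalence of the two conditions. The main obstacle — really the only nontrivial point — is the two-dimensional Hessian identity: verifying cleanly that "$K$ Killing" is equivalent to the vanishing of the trace-free Hessian of $\phi$, and that projecting the trace-free Hessian onto the $(0,1)\otimes(0,1)$ slot is the same as the condition $D_\mu D_\nu\phi = 0$, keeping careful track of which index on $P_\mu{}^\nu$ sits where. Everything else is bookkeeping with $d$, $\Lie_K$, and the relation $D_\mu\phi = \phi f_\mu$.
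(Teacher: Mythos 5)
Your proposal is correct and follows essentially the same route as the paper: both reduce $\nabla_{(\mu}K_{\nu)}=0$ to the vanishing of $D_\mu D_\nu\phi$ (the paper via $K^\mu=i(D^\mu\phi-\overline{D^\mu}\phi)$, you via the trace-free Hessian in two dimensions, which is the same computation in real guise) and then use $D_\mu\phi=\phi f_\mu$ to convert this into $D_\mu f_\nu+f_\mu f_\nu=0$. The only point to make explicit when writing it up is that the reality of $\phi$ is what makes the $(0,1)\otimes(0,1)$ projection an \emph{equivalence} (the $(2,0)$ component is the complex conjugate of the $(0,2)$ one), which the paper states as $D_\mu D_\nu\phi=0\Leftrightarrow\overline{D}_\mu\overline{D}_\nu\phi=0$.
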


 \begin{remark}
 This result holds on arbitrary (even non-compact) surface.
 \end{remark}

 \begin{proof}
A simple calculation shows that
\begin{equation}
D_\mu D_\nu \phi=0\Longleftrightarrow D_\mu D_\nu \psi + D_\mu \psi D_\nu \psi = 0. 
\end{equation}
Remind that $f_\mu=D_\mu \psi$.
Moreover,  the functions $\psi$ and $\phi$ are real. Therefore, 
\begin{equation}
D_\mu D_\nu \phi=0 \Longleftrightarrow \overline{D}_\mu \overline{D}_\nu \phi=0.
\end{equation}
Our definition of $K^\mu$ is equivalent to
\begin{equation}
K^\mu = i\left(D^\mu \phi-\overline{D^\mu} \phi\right),
\end{equation}
then due to $D_\mu\overline{D}_\nu \phi=\overline{D}_\nu D_\mu \phi$,
\begin{align}
2\nabla_{(\mu} K_{\nu)}  &= i\left((D_\mu + \bar{D}_\mu)(D_\nu - \bar{D}_\nu) + (D_\nu + \bar{D}_\nu)(D_\mu - \bar{D}_\mu)\right)\phi = \nonumber\\
&=2i\left(D_\mu D_\nu \phi-\bar{D}_\mu\bar{D}_\nu\phi\right)
\end{align}
This means that $\nabla_{(\mu} K_{\nu)}=0$ if and only if $D_\mu f_\nu+f_\mu f_\nu=0$.

Let us also compute
\begin{align}
\Lie_K \phi&=i\left(D^\mu \phi-\overline{D^\mu} \phi\right) \nabla_\mu \phi\\
&=i\left(D^\mu \phi-\overline{D^\mu} \phi\right) \left(D_\mu \phi+\overline{D_\mu} \phi\right) =i D^\mu \phi \overline{D_\mu} \phi-i \overline{D^\mu} \phi D_\mu \phi=0.
\end{align}
In the consequence,  we also have 
$\Lie_K \omega=0$.
\end{proof}

Proposition \ref{staticrigidity} is a simple corollary from this lemma.

\begin{proof}[Proof of Proposition \ref{staticrigidity}]
Since $\omega_\mu$ is closed,  we can define a real valued  function $\psi$  such that $\nabla_\mu \psi= a \omega_\mu$ and consequently
\begin{equation}
D_\mu \psi= \pi_\mu,
\end{equation}
where $\pi_\mu$ is defined in the second part of (\ref{dpi}). Then, Lemma \ref{lm:static-DD} shows that $K^\mu=-\epsilon^{\mu\nu}\partial_\nu \phi$, $\phi=e^\psi$ is a Killing vector preserving also $\omega_\mu$.
\end{proof}

We are in the position now, to apply our very result. 

Let us start with the EEH equation Definition \ref{extreq}. It follows, that every static solution on $\Sigma$ diffeomorphic to $S^2$  is axially symmetric. However, the axisymmetric solutions are all explicitely known and the static one is $\omega_\mu=0$ for  positive value   of $\Lambda$.  This result is well-known \cite{Chrusciel:2005pa}.

Secondly, we can apply Proposition \ref{staticrigidity}  to a $m$-quasi Einstein equations (one of GEH).

\begin{cor}
Every static solution $(g_{\mu\nu},\omega_\mu )$ on $\Sigma$ diffeomorphic to $S^2$ to  the equations:
\begin{equation}
    \nabla_{(\mu} \omega_{\nu)}+ \frac{2}{m}\omega_\mu\omega_\nu +  \frac{1}{2}R_{\mu\nu} - \frac{1}{2}\lambda g_{\mu\nu} = 0 
\end{equation}
is axially symmetric.
\end{cor}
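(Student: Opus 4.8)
The plan is to read the displayed system off as a special instance of the generalized extremal horizon equation of Definition~\ref{etremgeq} and then to quote the static rigidity already established. Since $\dim\Sigma=2$ we have $R_{\mu\nu}=\tfrac12 R\,g_{\mu\nu}$, so the equation is equivalent to
\begin{equation}
\nabla_{(\mu}\omega_{\nu)}+\frac{2}{m}\,\omega_\mu\omega_\nu+\Bigl(\tfrac14 R-\tfrac12\lambda\Bigr)g_{\mu\nu}=0,
\end{equation}
which is precisely \eqref{generalizedequation} with the nonzero constant $a=2/m$ and the smooth function $f=\tfrac14 R-\tfrac12\lambda\in C(\Sigma)$. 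Hence every statement proved for the GEH equation applies verbatim here.

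Next I would use that, by definition, a static solution has $\Omega_{\mu\nu}=2\nabla_{[\mu}\omega_{\nu]}=0$, i.e.\ $\omega$ is closed, and that $\Sigma\cong S^2$ is simply connected, so $\omega$ is exact. If $\omega\equiv 0$ the equation collapses to $R_{\mu\nu}=\lambda g_{\mu\nu}$; then $(\Sigma,g)$ has constant curvature and, being a metric on $S^2$, is the round sphere, which is rotationally (in particular axially) symmetric, and we are done. If $\omega\not\equiv 0$, Proposition~\ref{staticrigidity} applies --- its proof, through Lemma~\ref{lm:static-DD}, needs only the exactness of $\omega$ and no compactness --- and produces a nowhere-vanishing vector field $K^\mu=-\epsilon^{\mu\nu}\partial_\nu\phi$, with $\phi=e^{\psi}$ and $d\psi=a\omega$, satisfying $\Lie_K g=0=\Lie_K\omega$.

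It then remains to pass from this single Killing field to an honest circle symmetry. By Poincar\'e--Hopf, $K$ vanishes at some $p\in S^2$ (because $\chi(S^2)=2\neq0$); the flow $\phi_t$ fixes $p$ with differential $\exp\bigl(t(\nabla K)_p\bigr)$, and the skew endomorphism $(\nabla K)_p$ of $T_pS^2$ is nonzero (a Killing field vanishing together with its covariant derivative at a point vanishes identically), so $\phi_t$ is periodic, an isometry with trivial $1$-jet at a point of a connected manifold being the identity. Thus $K$ generates a $U(1)$ group of isometries preserving both $g$ and $\omega$, which is precisely the axial symmetry claimed. I expect this final passage to be the only part needing genuine (and wholly classical) input; all the analytic work has already been absorbed into Proposition~\ref{staticrigidity}.
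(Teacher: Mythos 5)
Your proposal is correct and takes essentially the same route as the paper: the paper simply observes that the system is an instance of the GEH equation (Definition \ref{etremgeq}) with $a=2/m$ and $f=\tfrac14 R-\tfrac12\lambda$, notes that staticity on the simply connected $S^2$ makes $\omega$ exact, and invokes Proposition \ref{staticrigidity}. You additionally spell out the (classical, correct) upgrade from a single Killing field on $S^2$ to a genuine $U(1)$ action and treat the trivial $\omega\equiv0$ case; the only blemish is calling $K$ ``nowhere-vanishing'' when Proposition \ref{staticrigidity} only gives that it is not identically zero --- indeed your own Poincar\'e--Hopf step requires $K$ to have a zero.
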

The advantage is, that in the axially symmetric case the equation amounts to a soluble system of ordinary differential equations with suitable conditions on the poles.

One more example is the system of EMEH equations Definition \ref{etremmaxeq}.

\begin{cor}\label{cor:staticrigidityofEM} Every  solution $(g_{\mu\nu}, \omega_\mu, \Phi)$ to the EMEH equations defined  on a $2$ dimensional manifold, namely
\begin{align}
       \nabla_{(\nu}\omega_{\mu)} + \omega_\mu\omega_\nu - \frac{1}{2}R_{\mu\nu} + 
        \frac{1}{2}\Lambda g_{\mu\nu} + |\Phi|^2 g_{\mu\nu} &=0,\\
        \left(D_\alpha + 2{ P_\alpha^{\phantom{\alpha}\beta}\omega_\beta} \right)\Phi &= 0,
   \end{align}
such that $\omega$ is exact,  but non-zero, admits a Killing vector $K$ such that 
\begin{equation}
    \Lie_K\omega = 0 = \Lie_K\Phi.
\end{equation}
Every static solution $(g_{\mu\nu}, \omega_\mu, \Phi)$  with $\omega\not=0$ to the EMEH equations defined on (topological) $2$-sphere is axially symmetric.
\end{cor}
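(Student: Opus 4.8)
The plan is to get the symmetry vector from Proposition~\ref{staticrigidity} (hence ultimately from Lemma~\ref{lm:static-DD}), to carry the invariance from $(g,\omega)$ over to the Maxwell potential $\Phi$ using the second EMEH equation, and finally to recognise the resulting Killing field on $S^2$ as a rotation. First, since $R_{\mu\nu}=\tfrac12 R\,g_{\mu\nu}$ in two dimensions, the first EMEH equation is an instance of the GEH equation \eqref{generalizedequation} with $a=1$ and $f=-\tfrac14 R+\tfrac12\Lambda+|\Phi|^2\in C(\Sigma)$. So whenever $\omega$ is exact and $\omega\not\equiv 0$, Proposition~\ref{staticrigidity} supplies the vector field $K^\mu=-\epsilon^{\mu\nu}\partial_\nu\phi$ with $\phi=e^\psi>0$ and $d\psi=\omega$; it is a \emph{nonzero} Killing field ($d\phi=\phi\,d\psi\neq 0$) and, by Lemma~\ref{lm:static-DD}, $\Lie_K g=0=\Lie_K\omega$ as well as $\Lie_K\phi=0=\Lie_K\psi$.

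Next I would prove $\Lie_K\Phi=0$. Since $\nabla_\mu\psi=\omega_\mu$, the second EMEH equation $(D_\mu+2P_\mu{}^\nu\omega_\nu)\Phi=0$ is exactly $D_\mu(e^{2\psi}\Phi)=0$, so $\tilde\Phi:=e^{2\psi}\Phi$ is holomorphic. On a topological sphere this closes the matter at once: a global holomorphic function on $S^2$ is constant, so either $\Phi\equiv 0$, or $\Phi=c\,e^{-2\psi}$ is nowhere zero and $\Lie_K\Phi=-2c\,e^{-2\psi}\Lie_K\psi=0$. For a general surface (the generality in which the first assertion is phrased) I would argue instead that $\Lie_K$, commuting with $\nabla$ and annihilating the volume form $\epsilon$ (hence commuting with $D_\mu=P_\mu{}^\nu\nabla_\nu$) and killing $P_\mu{}^\nu\omega_\nu$, makes $\chi:=\Lie_K\Phi$ solve the same equation $(D_\mu+2P_\mu{}^\nu\omega_\nu)\chi=0$; applying $\Lie_K$ to the first EMEH equation and using $\Lie_K g=\Lie_K\omega=\Lie_K R_{\mu\nu}=0$ forces $\Lie_K|\Phi|^2=0$, i.e. $\overline\Phi\,\chi$ is purely imaginary, so $\chi=iu\,\Phi$ with $u$ real on $\{\Phi\neq 0\}$; substituting into the $\chi$-equation and using the $\Phi$-equation gives $(D_\mu u)\Phi=0$, whence (as $\tilde\Phi$, so $\Phi$, has only isolated zeros unless $\Phi\equiv 0$) $u$ is real holomorphic, hence a constant $u_0$. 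Thus $\Lie_K\Phi=iu_0\Phi$, the flow of $K$ acts on $\Phi$ by the phase $e^{iu_0 t}$, and $u_0=0$ as soon as this flow has a fixed point at which $\Phi\neq 0$ — automatic on $S^2$.

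For the second claim take $\Sigma$ to be a topological $2$-sphere (if non-orientable, i.e. $\mathbb{RP}^2$, pull back to the orientable double cover, which is $S^2$, solve there, and descend the symmetry). A static solution has $d\omega=0$, hence $\omega$ is exact by simple connectedness, so the previous steps give a nonzero Killing field $K$ with $\Lie_K g=\Lie_K\omega=\Lie_K\Phi=0$. A nonzero Killing field on a metric $S^2$ has only isolated zeros, whose indices sum to $\chi(S^2)=2$, and near each zero its covariant derivative is a nonzero skew tensor, so $K$ linearises to a genuine infinitesimal rotation; the time-$T$ map of that rotation is an isometry with trivial $1$-jet at the zero, hence equals the identity, so the flow of $K$ is periodic and $K$ generates a rotational — axial — symmetry of $(g,\omega,\Phi)$. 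In the axially symmetric reduction the EMEH system becomes a soluble system of ODEs with pole-regularity conditions.

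The step I expect to be the real obstacle is transferring the symmetry to $\Phi$, and within it the vanishing of the phase rate $u_0$: the Killing field $K$ and the invariance of $g$ and $\omega$ are produced on any surface with exact $\omega$, but making $u_0=0$ genuinely requires a global input — a fixed point of the Killing flow, equivalently periodicity of the flow together with smoothness of $\Phi$ — which is exactly what the compact / spherical setting of the statement provides.
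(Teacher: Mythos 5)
Your proposal is correct and follows essentially the same route as the paper: cast the first EMEH equation as a GEH equation with $a=1$, $f=-\tfrac14 R+\tfrac12\Lambda+|\Phi|^2$, obtain the Killing vector $K^\mu=-\epsilon^{\mu\nu}\partial_\nu e^{\psi}$ from Proposition~\ref{staticrigidity} and Lemma~\ref{lm:static-DD}, rewrite the Maxwell equation as $D_\mu(e^{2\psi}\Phi)=0$ so that $e^{2\psi}\Phi$ is holomorphic and hence constant on the sphere, and extract $\Lie_K|\Phi|^2=0$ from the first equation to transfer the symmetry to $\Phi$; your closing argument that a Killing field on a metric $S^2$ generates a periodic (rotational) flow just fills in a standard step the paper leaves implicit.

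The one place you genuinely depart from the paper is the first claim on a general (possibly non-compact) $2$-manifold. The paper simply writes $\Phi=Ae^{-2\psi'}$ with $D_\mu A=0$ and asserts $A=\const$, which is only justified on a compact Riemann surface; you instead show $\Lie_K\Phi=iu_0\Phi$ for a real constant $u_0$ and observe that killing $u_0$ needs a global input (a fixed point of the flow with $\Phi\neq0$, or periodicity of the flow). This is a more honest accounting of where topology enters, and it correctly localizes the only possible residual freedom (a constant phase rotation of $\Phi$, which leaves $|\Phi|^2$ and hence the metric equation invariant); note only that on $S^2$ one should still rule out the degenerate possibility that both zeros of $K$ are also zeros of $\Phi$ — e.g.\ by using periodicity of the flow together with single-valuedness of $\Phi$ along a generic orbit, or by invoking the explicit form $\Phi=c\,e^{-2\psi}$, which on the sphere is nowhere zero unless $\Phi\equiv0$. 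For the statement as the paper actually uses it (the $S^2$ case feeding into Proposition~\ref{prop:staticEMEH}), both your argument and the paper's are complete.
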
   

Notice, that we rewrote the Maxwell equation in terms of the anti-holomorphic derivative. Indeed, if $\omega_\mu$ is stationary, then there is an axially symmetric function $\psi'$, such that
\begin{equation}
    \omega_\mu = \nabla_\mu \psi'.
\end{equation}
The general solution to the Maxwell equation (the second one above), is
\begin{equation}
\Phi = A e^{-2\psi'}, \ \ \ \ \ \ \ D_\mu A = 0. 
\end{equation}
The second equation implies 
\begin{equation}
 A = \const. 
\end{equation}
But it also follows from the EMEH equations, that  $\Lie_K|\Phi^2|=0$, hence
\begin{equation}
    \Lie_K\psi'=0.
\end{equation}

\subsection{Static axisymmetric EMEH equations on the two dimensional sphere}

\begin{prop}\label{prop:staticEMEH}
The only static solutions $(g_{\mu\nu}, \omega_\mu, \Phi)$ to the EMEH equations defined on (topological) $S^2$, are
\begin{equation}
\omega_\mu = 0, \ \ R = \Lambda + 2|\Phi|^2, \ \ \ \Phi = \const. 
\end{equation}
\end{prop}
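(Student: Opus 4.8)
The plan is to isolate the easy alternative first and then reduce the substantive case to an explicitly integrable ODE. If $\omega_\mu\equiv 0$, the second (Maxwell) equation becomes $D_\mu\Phi=0$, so $\Phi$ is holomorphic on the compact surface $\Sigma$ and hence constant, while the first equation collapses to $R_{\mu\nu}=(\Lambda+2|\Phi|^2)g_{\mu\nu}$, which in two dimensions is the asserted constant-curvature relation. So the whole content is to show that there is no static solution on $S^2$ with $\omega_\mu\not\equiv 0$.

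Suppose one exists. Since $\Sigma\cong S^2$ is simply connected the closed form $\omega$ is exact, so Corollary \ref{cor:staticrigidityofEM} applies: $(g,\omega,\Phi)$ is axially symmetric, with Killing field $K^\mu=-\epsilon^{\mu\nu}\partial_\nu\phi$ where $\phi=e^{\psi'}>0$, $\omega=d\psi'$, and (from the Maxwell equation, as in the discussion following that corollary) $\Phi=A\phi^{-2}$ with $A$ a constant. A nontrivial Killing field on $S^2$ generates a circle of rotations with exactly two fixed points; these are the zeros of $K$, hence of $\nabla\phi$, hence the only critical points of $\phi$, so $\phi$ takes every value in an interval $[\phi_-,\phi_+]$ with $0<\phi_-<\phi_+$ (strict, since $\omega\not\equiv 0$), the endpoints being the values of $\phi$ at the poles. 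Away from the poles I would use $\phi$ as a coordinate: $\Lie_K\phi=0$ gives $\nabla\phi\perp K$, and $|K|^2=|\nabla\phi|^2=\phi^2|\omega|^2$ forces $g_{\varphi\varphi}=1/g_{\phi\phi}$; writing $g_{\phi\phi}=1/Q$ puts the metric in the familiar form $g=Q(\phi)^{-1}d\phi^2+Q(\phi)d\varphi^2$ with $Q>0$ on $(\phi_-,\phi_+)$, and after a standard normalization the poles impose the regularity conditions $Q(\phi_\pm)=0$ and $Q'(\phi_-)=-Q'(\phi_+)=2$.

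Next I would substitute this ansatz into the first EMEH equation. In two dimensions its trace-free part is precisely the condition of Lemma \ref{lm:static-DD} already guaranteed by axial symmetry, so the only remaining content is the trace; using $R=-Q''$, $\omega=\phi^{-1}d\phi$ and $|\Phi|^2=|A|^2\phi^{-4}$ it becomes the single linear second-order equation $(\phi^2Q')'=-2\Lambda\phi^2-4|A|^2\phi^{-2}$, which integrates twice to $Q(\phi)=c_2-\frac{\Lambda}{3}\phi^2-c_1\phi^{-1}-2|A|^2\phi^{-2}$ with constants $c_1,c_2$.

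The heart of the argument — and the step I expect to be the most laborious — is to show that the four boundary conditions are incompatible with $0<\phi_-<\phi_+$. The plan: use $Q(\phi_-)=Q(\phi_+)=0$ to eliminate $c_1,c_2$; then $Q'(\phi_-)+Q'(\phi_+)=0$ reduces, via $(\phi_++\phi_-)^2-4\phi_+\phi_-=(\phi_+-\phi_-)^2$, to the factored identity $(\phi_+-\phi_-)^2\big(\Lambda\phi_-^2\phi_+^2+6|A|^2\big)=0$. Since $\phi_+\neq\phi_-$ the second factor must vanish, which is impossible unless $\Lambda<0$ with $|A|^2=-\frac{1}{6}\Lambda\phi_-^2\phi_+^2$, or $\Lambda=|A|=0$. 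In the first case, feeding this back into the last condition $Q'(\phi_-)-Q'(\phi_+)=4$ collapses it to $\phi_+-\phi_-=3/\Lambda<0$; in the second, $Q=c_2-c_1\phi^{-1}$ and the boundary conditions force $Q\equiv 0$ — contradictions in both cases. Hence $\omega_\mu\equiv 0$, and the proposition follows from the first paragraph. The only genuine labour is this elimination of $c_1,c_2$ and the symmetric-function bookkeeping producing the two displayed identities.
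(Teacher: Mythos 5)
Your proposal is correct and follows essentially the same route as the paper: invoke static rigidity to get the axial Killing field, pass to the adapted coordinate built from the exponentiated rotation potential, integrate the trace of the EMEH system to the explicit one-parameter-family quartic profile, and show the four pole-regularity conditions are mutually incompatible. The only difference is cosmetic — you eliminate the integration constants by symmetric-function bookkeeping and factor out $(\phi_+-\phi_-)^2\bigl(\Lambda\phi_-^2\phi_+^2+6|A|^2\bigr)$, whereas the paper factors the quartic $Q(x)=f(x)(x-x_-)(x-x_+)$ and compares coefficients; both computations check out.
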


\begin{proof}
We will use coordinate system $(x,\psi)$
\begin{equation}
g=\frac{1}{H(x)}dx^2+H(x)d\psi^2,
\end{equation}
where $x\in [x_-,x_+]$ and $\psi\in [0,L]$ where $L>0$ (cyclic). The condition for the metric to be smooth in points $x_\pm$:
\begin{enumerate}
    \item $H(x_\pm)=0$ and $H(x)>0$ otherwise,
    \item $H'(x_-)=-H'(x_+)=\frac{4\pi}{L}$
    \item Function $H$ extends smoothly from the interval $[x_-,x_+]$
\end{enumerate}
Moreover, a function $f(x)$ is smooth on the sphere if and only if $f(x)$ extends smoothly from the interval $[x_-,x_+]$.

We choose normalization of coordinate system where
\begin{equation}
    x_\pm=x_0\pm 1.
\end{equation}
With this choice the area $A$ of the horizon is equal to $2L$.

We assume that $\omega=d\phi$ and we define $B=e^{\phi}>0$. The Killing vector field $\epsilon^{\mu\nu}\partial_\nu B$ is proportional to $\partial_\psi$, thus
\begin{equation}
    \dot{B}=c_0=\operatorname{const}.
\end{equation}
Let us remind that $B$ is a function of $x$ only and $\dot{B}$ denotes derivative in $x$.

We consider a case when $\omega\not=0$ thus we assume that $c_0\not=0$,
\begin{equation}
    B=c_0(x-x_p)
\end{equation}
By shifting variable we can choose $x_p=0$. By applying if necessary a reflection $x\rightarrow -x$ we can assume that $c_0>0$. Let us notice that $B$ is defined up to multiplication by positive constant, thus we also choose $c_0=1$ and $B=x$.  Let us remark that in this case $x_+>x_->0$ ($x_0>1$).

The conditions of EMEH equations are equivalent to
\begin{equation}\label{eq:trace}
    \nabla^\mu \omega_\mu+\omega^\mu \omega_\mu -\frac{1}{2}R+\Lambda+2|\Phi|^2=0
\end{equation}
and two conditions
\begin{align}
& D_\mu \pi_\mu+\pi_\mu \pi_\nu=0\\   
&    D_\mu \Phi+2\pi_\mu \Phi=0
\end{align}
The last two conditions are equivalent to $K^\mu$ being a Killing vector and condition
\begin{equation}
    D_\mu (\Phi e^{2\phi})=0\Longleftrightarrow \Phi=a_0e^{-2\phi}=a_0B^{-2}
\end{equation}
where $a_0$ is a complex constant.

Let us consider \eqref{eq:trace}. We notice that $R=-\ddot{H}$ and for any function $f(x)$
\begin{equation}
    \nabla_\mu \nabla^\mu f=(H\dot{f})'
\end{equation}
The equation \eqref{eq:trace} take a form
\begin{equation}
\left(H\frac{\dot{B}}{B}\right)'+H\left(\frac{\dot{B}}{B}\right)^2+\frac{1}{2}\ddot{H}+\Lambda+\frac{2|a_0|^2}{B^4}=0
\end{equation}
where we used formula for $\Phi$. Substituting formula for $B$ we obtain
\begin{equation}
    \ddot{H}+\frac{2}{x}\dot{H}+2\Lambda+\frac{2\alpha}{x^4}=0,\quad \alpha=2|a_0|^2
\end{equation}
The general solution of this equation has two free parameters $c_1,c_2$
\begin{equation}
    H=c_1+\frac{c_2}{x}-\frac{\Lambda}{6}x^2-\frac{\alpha}{x^2}=\frac{Q(x)}{x^2},\quad Q(x)=-\frac{\Lambda}{6}x^4+c_1x^2+c_2x-\alpha
\end{equation}
However, the conditions for the smooth metric excludes this case.

\begin{lm}
For every $H(x)$ as above the metric is not smooth.
\end{lm}

\begin{proof}
We can write
\begin{equation}
    Q(x)=f(x)(x-x_-)(x-x_+)
\end{equation}
where $f(x)$ is quadratic polynomial. Derivatives in $x_\pm$ simplify
\begin{equation}
    H'(x_+)=\frac{f(x_+)}{x_+^2}(x_+-x_-),\quad H'(x_-)=\frac{f(x_-)}{x_-^2}(x_--x_+).
\end{equation}
The condition for the absence of conical singularities is given by
\begin{equation}
    \frac{f(x_+)}{x_+^2}=\frac{f(x_-)}{x_-^2}=c_3<0
\end{equation}
and $L=-\frac{4\pi}{c_3}$.
The solution is
\begin{equation}
    f(x)=c_3x^2+c_4(x-x_-)(x-x_+)
\end{equation}
Comparing the coefficients at $x^3$ in the polynomial $Q(x)$
\begin{equation}
    0=-(c_3+c_4)(x_++x_-)-c_4(x_++x_-)=-c_3(x_++x_-)\Longrightarrow c_3+2c_4=0
\end{equation}
The constant term is given by
\begin{equation}
    -\frac{\alpha}{2}=c_4x_+^2x_-^2\Longrightarrow c_4<0
\end{equation}
This is a contradiction as $c_3<0$.

\end{proof}
The conclusion is, that 
$$\omega_\mu=0.$$
The conclusion of the proposition follows easily from the equations when $\omega_\mu$ is replaced by $0$. 
\end{proof}

\section{Summary}
We have derived the general static solution of the Einstein vacuum extremal horizon (EEH) equation (Definition \ref{extreq})  with $\lambda<0$. A detailed construction of the solution  was carried out  from admissible data that   (Theorem \ref{thm:generalstatic}).  In this way, we have completed the result obtained in \cite{Wylie2023}, also, we have presented  our own method for solving the equation.

We have found the general solution to the Einstein-Maxwell extremal horizon (EMEH) equations on $2$ dimensional compact manifold of non-positive Euler (Corollary \ref{cor:EMEHgenus}). Similar  topological censorship is known for  the EEH equations \cite{DKLS1, DKLS2} and for the EMEH equations with non-exact rotation potential (Definition \ref{etremmaxeq})  \cite{CKL-new}. For arbitrary  $2$ dimensional  manifold we have shown the rigidity of the  solutions to the EMEH equations such that the rotation potential is exact (Proposition \ref{prop:staticEMEH}). That implies the rigidity of every static solution to the EMEH equations considered on a manifold of the topology of $S^2$ and, combined with the very recent rigidity results on  non-static solutions \cite{CKL-new},  in the consequence the uniqueness of the Reissner-Nordstr{\"o}m-(Anti)-de-Sitter solution (Proposition \ref{prop:staticEMEH}).\footnote{The case of $\lambda>0$ and electromagnetic field was not explicitly considered in the literature (see Footnote 1 in \cite{CKL-new}).}  Those  results follow from our investigation of a more general family of equations we call generalized extremal horizon (GEH) equations  (Definition \ref{etremgeq}). Their solutions satisfy the topological censorship Proposition \ref{genus} and the static rigidity  Proposition \ref{staticrigidity}. Those conclusions are more general (in the case of the topological censorship) or complementary  (in the case of the rigidity)  to those of \cite{CKL-new}.

We have derived Master Identity (see Lemma \ref{lm:distiled}) for the EEH equation whose contraction with the Dunajski-Lucietti vector field provides their identity \cite{DL-rigidity}. It was that identity, that we applied later for the derivation of the static solutions of $\lambda<0$ discussed above, and for our  simplified  proof of the rigidity of the solutions to the EEH equation for arbitrary sign of $\lambda$ (Proposition \ref{pr:rigiditya}).

\bigskip

\noindent{\bf Acknowledgements:} Our thanks go to Maciek Dunajski and James Luccieti for providing us with useful information and turning our attention to reference \cite{Wylie2023}. 
JL was supported by the Polish National Science Centre grant No. 2021/43/B/ST2/02950. 

\appendix

\section{On the proof of the AMS theorem.}\label{sec:AMS-comments}

It is shown in \cite{AMS} that both $H$ and $H^\dagger$ have strictly positive principal eigenfunctions with eigenvalue $0$.
\begin{equation}
    H\psi_p=0,\quad H^\dagger \psi'_p=0
\end{equation}
where $\psi'_p=1$.
We will now explain why this eigenvalue is multiplicity free and simple and moreover, why every other eigenvalue has real part strictly bigger than $0$.
Let us remind that both $H$ and $H^\dagger$ have discrete spectra with generalized eigenspaces of finite order.

We consider an identity from \cite{AMS}, valid for arbitrary complex function $u$
\begin{equation}\label{eq:AMS-uu}
H^\dagger |u|^2=2\Re \left(\overline{u}H^\dagger(u) \right)-2\nabla_\mu \overline{u}\nabla^\mu u
\end{equation}
Suppose now that $u$ is an eigenfunction of $H^\dagger$ with eigenvalue $\mu$ such that $\Re\mu\leq 0$ then
\begin{equation}
    H^\dagger |u|^2=-\left(2\nabla_\mu \overline{u}\nabla^\mu u-2\Re\mu |u|^2\right).
\end{equation}
However, $\langle \psi_p,H^\dagger |u|^2\rangle=\langle H\psi_p, |u|^2\rangle=0$ thus
\begin{equation}
    0=\int \psi_p\left(2\nabla_\mu \overline{u}\nabla^\mu u-2\Re\mu |u|^2\right).
\end{equation}
As the integrand is non-negative it needs to vanish identically. In particular, $\nabla_\mu u=0$. However, this shows that $u$ is constant and $\mu=0$. We showed that every non-zero eigenvalue has real part bigger than $0$. Moreover, if $H^\dagger u=0$ then $u$ is constant.

Suppose now that $(H^\dagger)^2u=0$ but $H^\dagger u\not=0$ then
\begin{equation}
H^\dagger u=c\psi_p',\quad c\not=0
\end{equation}
However,
\begin{equation}
    0=\langle \psi_p, H^\dagger u\rangle=c\langle \psi_p,  \psi_p'\rangle
\end{equation}
Moreover, $\langle \psi_p,  \psi_p'\rangle\not=0$ as both functions are strictly positive, so $c=0$ a contradiction. We showed that generalized eigenspace for $H^\dagger$ is one dimensional.

The resolvents of $H$ and $H^\dagger$ are related
\begin{equation}
    (H-z){-1}=\overline{(H^\dagger-\overline{z})^{-1}}.
\end{equation}
Thus, in fact we proved also the desired properties for operator $H$.

Let $U_t$ be the unitary group commuting with $H$ and preserving space of non-negative functions. The eigenspace for $0$ is one dimensional thus $U_t\Psi=c_t\Psi$. From unitarity, $|c_t|=1$. However, as $U_t$ preserves the space of positive functions $c_t>0$. As result $c_t=1$.

\section{Derivation of Master Identity}\label{sec:MI-derivation}

We compute first $\square h_\mu$.

\begin{lm}
If EEH equation is satisfied then $\square h_\mu=(\nabla^\mu h_\mu)h_\nu+\Omega_{\mu\nu}h^\mu$.
\end{lm}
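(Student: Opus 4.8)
The plan is to derive the claim from three ingredients: the identity relating the Killing wave operator to the symmetrized derivative recorded above, namely the Weitzenb\"ock-type formula $\square h_\nu = 2\nabla^\mu\bigl(\nabla_{(\nu}h_{\mu)} - \tfrac12 g_{\nu\mu}\nabla^\chi h_\chi\bigr)$; the contracted Bianchi identity $\nabla^\mu R_{\nu\mu} = \tfrac12\nabla_\nu R$; and the trace of the EEH equation. First I would solve \eqref{theequation'} for the symmetrized derivative, $\nabla_{(\nu}h_{\mu)} = \tfrac12 h_\nu h_\mu - R_{\nu\mu} + \lambda g_{\nu\mu}$, and substitute it into the Weitzenb\"ock identity.

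Taking the divergence term by term, the $\lambda g_{\nu\mu}$ term vanishes, the Ricci term becomes $-\nabla^\mu R_{\nu\mu} = -\tfrac12\nabla_\nu R$ by Bianchi, the quadratic term produces $(\nabla^\mu h_\nu)h_\mu + (\nabla^\mu h_\mu)h_\nu$, and the trace term contributes $-\nabla_\nu(\nabla^\chi h_\chi)$. This gives
\begin{equation}
\square h_\nu = (\nabla^\mu h_\mu)h_\nu + (\nabla^\mu h_\nu)h_\mu - \nabla_\nu\!\bigl(R + \nabla^\chi h_\chi\bigr).
\end{equation}
I would then split $(\nabla^\mu h_\nu)h_\mu$ into its symmetric and antisymmetric pieces using $\Omega_{\mu\nu} = 2\nabla_{[\mu}h_{\nu]}$, namely $(\nabla^\mu h_\nu)h_\mu = \tfrac12\nabla_\nu(h_\mu h^\mu) + \Omega_{\mu\nu}h^\mu$.

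The final step, which makes the gradient terms telescope, is to take the trace of \eqref{theequation'}: contracting with $g^{\mu\nu}$ gives $\nabla^\chi h_\chi - \tfrac12 h_\chi h^\chi + R - n\lambda = 0$, so the scalar $R + \nabla^\chi h_\chi - \tfrac12 h_\chi h^\chi$ equals the constant $n\lambda$, whence $\nabla_\nu(R + \nabla^\chi h_\chi) = \tfrac12\nabla_\nu(h_\chi h^\chi)$. Plugging this back, the two copies of $\tfrac12\nabla_\nu(h_\mu h^\mu)$ cancel and one is left with $\square h_\nu = (\nabla^\mu h_\mu)h_\nu + \Omega_{\mu\nu}h^\mu$, which is exactly the assertion. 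I do not expect any genuine obstacle here: the whole argument is a short, direct computation, and the only points demanding care are sign and index-placement consistency in the antisymmetric piece and in the curvature conventions, together with the observation that it is precisely the trace of the EEH equation that makes the scalar $R + \nabla^\chi h_\chi - \tfrac12 h_\chi h^\chi$ constant.
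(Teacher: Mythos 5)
Your argument is correct and is essentially the paper's own proof: both start from the Weitzenb\"ock identity $\square h_\nu = 2\nabla^\mu\bigl(\nabla_{(\nu}h_{\mu)}-\tfrac12 g_{\nu\mu}\nabla^\chi h_\chi\bigr)$, substitute the EEH equation, and kill the curvature and trace gradients using the contracted Bianchi identity together with the trace of the EEH equation. The only cosmetic difference is that the paper packages $R_{\mu\nu}-\tfrac12 Rg_{\mu\nu}$ into the divergence-free Einstein tensor before differentiating, whereas you apply Bianchi and the trace relation term by term; the computations are identical.
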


\begin{proof}
Let us notice that
\begin{equation}
\nabla_{(\mu}h_{\nu)}-\frac{1}{2}g_{\mu\nu}\nabla^\chi h_\chi=\frac{1}{2}h_\mu h_\nu-\frac{1}{4}g_{\mu\nu} h_\chi h^\chi-G_{\mu\nu}-\left(\frac{n}{2}-1\right)\lambda g_{\mu\nu}
\end{equation}
where $G_{\mu\nu}$ is the Einstein tensor. Taking divergence of this equation and multiplying by $2$
\begin{equation}
\square h_\nu=\nabla^\mu(h_\mu h_\nu)-\frac{1}{2}\nabla_\nu (h_\mu h^\mu)=(\nabla^\mu h_\mu)h_\nu +h_\mu \nabla^\mu h_\nu-h_\mu \nabla_\nu h^\mu.
\end{equation}
The result follows from $\Omega_{\mu\nu}=2\nabla_{[\mu} h_{\nu]}$.
\end{proof}

We notice that 
\begin{equation}
\square (\Gamma h_\mu)=(\Delta \Gamma)h_\mu+2(\nabla^\nu \Gamma) \nabla_\nu h_\mu+\Gamma \square h_\nu.
\end{equation}
We compute
\begin{equation}
(\nabla^\nu \Gamma) \nabla_\nu h_\mu=(\nabla^\nu \Gamma) \nabla_{(\nu} h_{\mu)}+(\nabla^\nu \Gamma) \nabla_{[\nu} h_{\mu]}=(\nabla^\nu \Gamma) \nabla_{(\nu} h_{\mu)}+\frac{1}{2}(\nabla^\nu \Gamma) \Omega_{\nu\mu}
\end{equation}
Moreover,
\begin{equation}
(\nabla^\nu \Gamma) \nabla_{(\nu} h_{\mu)}=(\nabla^\nu \Gamma) \left(\frac{1}{2} h_{\mu}h_\nu-R_{\mu\nu}\right)+\lambda \nabla_\mu \Gamma.
\end{equation}
Finally we compute
\begin{equation}
\Delta \nabla_\nu \Gamma=\nabla^\mu \nabla_\nu \nabla_\mu \Gamma= \nabla_\nu \nabla^\mu\nabla_\mu \Gamma+R_{\mu\phantom{\chi\mu}\nu}^{\phantom{\mu}\chi\mu}\nabla_\chi \Gamma=\nabla_\nu\Delta \Gamma+R^\chi_\nu \nabla_\chi \Gamma
\end{equation}
This means
\begin{equation}
\square \nabla_\nu \Gamma=\nabla_\nu\Delta \Gamma+2R^\chi_\nu \nabla_\chi \Gamma
\end{equation}
Together,
\begin{align}
\square \left(\nabla_\nu \Gamma+\Gamma h_\nu\right)=&\nabla_\nu \Delta\Gamma +2R^\chi_\nu \nabla_\chi \Gamma+(\nabla^\mu \Gamma) \left(h_{\mu}h_\nu-2R_{\mu\nu}\right)+\nonumber\\
&+\Delta \Gamma h_\mu+2\lambda \nabla_\nu \Gamma+(\nabla^\mu \Gamma) \Omega_{\mu\nu}+\Gamma(\nabla^\mu h_\mu)h_\nu+\Gamma h^\mu \Omega_{\mu\nu}.
\end{align}
The equality $\nabla^\mu (\Gamma h_\mu)=(\nabla^\mu \Gamma)h_\mu+\Gamma \nabla^\mu h_\mu$ allow us to write
\begin{equation}
\square \left(\nabla_\nu \Gamma+\Gamma h_\nu\right)=\nabla_\nu(\Delta \Gamma+2\lambda \Gamma)+\left(\Delta \Gamma+\nabla^\mu (\Gamma h_\mu)\right)h_\nu+\left(\nabla^\mu \Gamma+\Gamma h^\mu\right)\Omega_{\mu\nu}
\end{equation}
This is exactly the identity.

    \bibliographystyle{ieeetr}

\end{document}